\newcommand{\footremember}[2]{%
    \footnote{#2}
    \newcounter{#1}
    \setcounter{#1}{\value{footnote}}%
}
\newcommand{\footrecall}[1]{%
    \footnotemark[\value{#1}]%
}
\newcommand{\cO}{\mathcal{O}}
\newcommand{\reject}{\textsf{reject}}
\newcommand{\GRECON}{$\mathcal{G}$-\textsc{Recognition}}
\newcommand{\GSRECO}{$\mathcal{G}$-\textsc{Strong-Rec}}
\newcommand{\GRECO}{$\mathcal{G}$-\textsc{Weak-Rec}}
\newcommand{\GSRECOP}{\ProblemName{$\mathcal{G}$-\textsc{Strong-Rec}}}
\newcommand{\GRECOP}{\ProblemName{$\mathcal{G}$-\textsc{Weak-Rec}}}
\newtheorem{theorem}{Theorem}
\newtheorem{corollary}{Corollary}
\newtheorem{definition}{Definition}
\newtheorem{lemma}{Lemma}
\newtheorem{remark}{Remark}
\newenvironment{proof}[1][]{\par \noindent {\bf Proof #1}\ }{\hfill$\Box$\par \vspace{11pt}}
\newcommand{\pname}{\textsc}
\newcommand{\ProblemFormat}[1]{\pname{#1}}
\newcommand{\ProblemIndex}[1]{\index{problem!\ProblemFormat{#1}}}
\newcommand{\ProblemName}[1]{\ProblemFormat{#1}\ProblemIndex{#1}{}\xspace}
\newlength{\RoundedBoxWidth}
\newsavebox{\GrayRoundedBox}
\newenvironment{GrayBox}[1]%
   {\setlength{\RoundedBoxWidth}{.93\textwidth}
    \def\boxheading{#1}
    \begin{lrbox}{\GrayRoundedBox}
       \begin{minipage}{\RoundedBoxWidth}}%
   {   \end{minipage}
    \end{lrbox}
    \begin{center}
    \begin{tikzpicture}%
       \node(Text)[draw=black!80,fill=white,rounded corners,%
             inner sep=2ex,text width=\RoundedBoxWidth]%
             {\usebox{\GrayRoundedBox}};
        \coordinate(x) at (current bounding box.north west);
        \node [draw=white,rectangle,inner sep=3pt,anchor=north west,fill=white] 
        at ($(x)+(6pt,.75em)$) {\boxheading};
    \end{tikzpicture}
    \end{center}}     
\newenvironment{defproblemx}[2][]{\noindent\ignorespaces%
                                \FrameSep=6pt%
                                \parindent=0pt%
                \vspace*{-1.5em}
                \ifthenelse{\isempty{#1}}{%
                  \begin{GrayBox}{\textsc{#2}}%                
                }{%
                  \begin{GrayBox}{\textsc{#2} parameterized by~{#1}}%  
                }
                \begin{tabular*}{\textwidth}{@{\hspace{.1em}} >{\itshape} p{1.8cm} p{0.8\textwidth} @{}}%        
            }{
                \end{tabular*}%
                \end{GrayBox}%
                \ignorespacesafterend
            }  
\newcommand{\defproblema}[3]{% FJR Version
  \begin{defproblemx}{#1}
    Input:  & #2 \\
    Output: & #3
  \end{defproblemx}
}%
\title{Graph Reconstruction in the Congested Clique}%\footnote{This work was partially supported  by CONICYT via Basal in Applied Mathematics (P.M. and I.R.), Fondecyt 1170021 (I.R.), N\'ucleo Milenio Informaci\'on y Coordinaci\'on en Redes ICM/FIC RC130003 (I.R.).}}
\author{%
Pedro Montealegre\footremember{1}{Facultad de Ingenier\'{\i}a y Ciencias, Univ. Adolfo Ib\'a\~nez, Santiago, Chile, \texttt{pedro.montealegre@uai.cl}}
\and Sebastian Perez-Salazar\footremember{2}{DIM-CMM  (UMI 2807 CNRS), Univ. de Chile, Santiago, Chile, \texttt{\{sperez,rapaport\}@dim.uchile.cl}}
\and Ivan Rapaport\footrecall{2}
\and Ioan Todinca\footremember{3}{Univ. Orl\'{e}ans, INSA Centre Val de Loire, LIFO EA 4022, Orl\'{e}ans, France, \texttt{ioan.todinca@univ-orleans.fr}}
}
\begin{document}

\maketitle

%%%!TEX root = Randomness.tex
\vspace{-0.5cm}
\begin{abstract}

The congested clique model is a message-passing model of distributed computation where the underlying
communication network is the complete graph of $n$ nodes. In this paper we consider the situation where the joint input to
the nodes is an $n$-node labeled graph $G$, i.e., the local input received by each node is the indicator function
of its neighborhood in $G$. Nodes execute an algorithm, communicating with each other
in synchronous rounds and their goal is to compute some function that depends on $G$.
In every round, each of the $n$ nodes may send up to $n-1$ different $b$-bit messages through each of its $n-1$
communication links. We denote by $R$ the number of rounds of the algorithm. The product $Rb$, that is, the total number of bits received by a node through one link, is the {\emph{cost}}
of the algorithm.

The most difficult problem we could attempt to solve is the {\emph{reconstruction problem}},
where nodes are asked to recover all the edges of the input graph $G$. Formally, given a class of graphs $\mathcal G$, the problem is defined as
follows: if $G \notin {\mathcal G}$, then every node must \reject; on the other hand, if $G \in {\mathcal G}$, then every node must end up, after the $R$ rounds,
knowing all the edges of $G$. It is not difficult to see that the cost $Rb$ of any algorithm that solves this problem (even with public coins)
is at least $\Omega(\log|\mathcal{G}_n|/n)$, where $\mathcal{G}_n$ is the subclass of all $n$-node labeled graphs in $\mathcal G$. In this paper we prove that
previous bound is tight and that it is possible to achieve it with only $R=2$ rounds. More precisely, we exhibit (i) a one-round algorithm that achieves this bound for hereditary graph classes;  and (ii) a two-round algorithm that achieves this bound for arbitrary graph classes. Later, we show that the bound $\Omega(\log|\mathcal{G}_n|/n)$ cannot be achieved in one-round for arbitrary graph classes, and we give tight algorithms for that case. 

From (i)  we recover all known results concerning the reconstruction of graph classes in one round and bandwidth $\cO(\log n)$: forests, planar graphs, cographs, etc. But we also get new one-round algorithms for other hereditary graph classes such as unit disc graphs, interval graphs, etc. From (ii), we can conclude that any problem restricted to a class of graphs of size  $2^{\cO(n\log n)}$ can be solved in the congested clique model in two rounds, with bandwidth $\cO(\log n)$. Moreover, our general two-round algorithm is valid for any set of labeled graphs, not only for graph classes (which are sets of labeled graphs closed under isomorphims).

\end{abstract}

\section{Introduction}

The  {\emph{congested clique}} model 
--a message-passing model of distributed
computation where the underlying communication network is the complete graph~\cite{lotker2005minimum}--
is receiving increasingly more attention~\cite{berns2012super,Censor-HillelKK15,dolev2012,drucker2014,ghaffari2016improved,ghaffari2016mst,hegeman2015toward,hegeman2014b,lenzen2013,patt2011}. There are deep connections between the  congested clique model and popular 
distributed systems 
such as the $k$-machine model~\cite{klauck2015distributed} or MapReduce~\cite{hegeman2014}.  Moreover, with the emergence of large-scale networks, this  model  has started to be used 
in other areas such as distributed convex learning~\cite{arjevani2015communication}.

The  congested clique model is defined as follows. 
There are $n$ nodes which are given distinct identities (IDs), that we assume for simplicity to be numbers between 1 and $n$.
In this paper we consider the situation where the joint input to the nodes {\emph{is a graph}} $G$. More precisely, each node $v$ receives 
as input an $n$-bit boolean vector  $x_v \in \{0,1\}^n$,  which is the indicator function 
of its neighborhood in  $G$. Note that the 
 input graph $G$ is an arbitrary $n$-node graph,
{\emph{a subgraph of the communication network  $K_n$}}.

Nodes execute an algorithm, communicating
with each other in synchronous rounds and their goal is to compute  some function $f$  that depends on $G$. In every round, each of the $n$ nodes may send up to $n-1$ different $b$-bit messages through  each of its  $n-1$ communication links.  When an algorithm  stops {\emph{every node must know}} $f(G)$. We call $f(G)$ the  {\emph{output}} of the distributed algorithm. The parameter $b$ is known as
the {\emph{bandwidth}} of the algorithm. We denote by $R$ the \emph{number of rounds}. The product $Rb$ represents the total number of bits received by a node through one link, and we call it the \emph{cost} of the algorithm.

An algorithm may  be deterministic or randomized. We distinguish two sub-cases of randomized algorithms: the private-coin setting, where each node flips its own coin; and the public-coin setting, where the coin is shared between all nodes. An  $\varepsilon$-error algorithm ${\mathcal A}$ that computes a function $f$ is a randomized algorithm
 such that, for every input graph $G$, $\Pr( \mathcal{A} \textrm{ outputs } f(G) )\geq 1- \varepsilon$. In the case where $\varepsilon \to 0$ as $n \to \infty$, we say that ${\mathcal A}$ computes $f$ with high probability (whp). 

Function $f$ defines the problem to be solved. A $0-1$ function corresponds to a decision problem (such as 
connectivity~\cite{hegeman2015toward}). For other, more general types of problems, $f$ should be defined, in fact, 
as a relation. This happens, for instance, when we want to construct a minimum spanning tree~\cite{ghaffari2016mst},  
a 3-ruling set~\cite{hegeman2014b},  all-pairs shortest-paths~\cite{Censor-HillelKK15}, etc. 

The most difficult problem we could attempt to solve is the  {\emph{reconstruction problem}}, where nodes are 
asked to reconstruct the input graph $G$. 
In fact, if at the end of the algorithm every node $v$ has full knowledge  of $G$, then it could answer  any question concerning $G$. 
(This holds because in the congested clique model nodes have unbounded computational power and the only cost is related to communication).

In centralized, classical graph algorithms,  a widely used approach to cope with NP-hardness is to restrict the class of graphs where the input $G$ belongs.
Consider, for instance, the coloring problem, where the goal is to determine
the minimum number of colors that we can assign to the vertices of $G$ such that  no two vertices sharing the same edge have the same 
color~\cite{garey2002computers}.
It is known that, if the input is restricted to the class $\mathcal G$ of interval graphs, the coloring problem is 
polynomial~\cite{garey2002computers}. 
Nevertheless, if we restrict it to
planar graphs, the problem remains NP-complete~\cite{garey2002computers}. We are going to use the same approach here, in the congested clique model. But, as we are going to explain later, surprisingly, the complexity of the reconstruction problem {\emph{will only depend on the cardinality}} of the subclass of $n$-node graphs in    $\mathcal G$.

%Inspired by centralised graph algorithms, we introduce the study of graph classes in the context of the Congested Clique model. 
%Indeed, a wide used approach to cope with NP-hard consists in trying to solve such problems when the input is restricted to belong to some graph class. 
%Consider for instance the coloring problem, where the goal is to determine
%the minimum number of colors that we can assign to the vertices of $G$ such that  no two vertices sharing the same edge have the same color [x].
%It is known that, if the
%input graph $G$ belongs to the class of internal graphs, the colouring problem is polynomial [z].  On the other hand,  if the input is restricted to the class $\mathcal G$ of planar graphs the problem is NP-complete [y]. As we are going to explain later, surprisingly, the complexity of a problem restricted to a graph class $\mathcal{G}$ will only depend on its cardinality. 

Formally, for any fixed set of graphs  $\mathcal G$ we 
are going to introduce  two problems.
The first one, the  {\emph{strong recognition problem}} {\GSRECO}, is the following.

\bigskip
%
%
%{\underline{ {\mbox{\GSRECO}}}
%
%\medskip
%
%$
%\begin{array}{ll}
% {\mbox{Input:}} & {\mbox{An arbitrary graph }} G; \\ 
% \smallskip
%{\mbox{Output:}} & 
%\begin{cases}
%    G & \text{if } G \in {\mathcal G};\\
%   \mbox{reject } & \text{otherwise.}
%  \end{cases}
%\end{array}
%$
%
%\bigskip

\defproblema{\GSRECOP}%
{An arbitrary graph  $G$}%
{$\begin{cases}
   \text{all the edges of } G & \text{if } G \in {\mathcal G};\\
   \mbox{reject } & \text{otherwise.}
  \end{cases}$}

\medskip

 Recall that the output is computed by {\emph{every node}} of the network. In other words, every node of an algorithm that solves  {\GSRECO}
must end up knowing whether $G$ belongs to ${\mathcal G}$; and, in the positive cases, every node also finishes
knowing all the edges of $G$. Note that, in principle, $\mathcal G$ could be defined as the set of all graphs.

We also define a  {\emph{weak recognition problem}} {\GRECO}. This  is a promise problem, where the input graph $G$ is promised to belong to ${\mathcal G}$.
In other words, for graphs that do not belong to ${\mathcal{G}}$, the behavior of an algorithm that solves {\GRECO} does not matter.

\medskip
%
%{\underline{{\mbox{\GRECO}}}
%
%\medskip
%
%$
%\begin{array}{ll}
%{\mbox{Input:}}   & G \in {\mathcal G}; \\ 
%{\mbox{Output:}} &  G.
%\end{array}
%$

\defproblema{\GRECOP}%
{$G \in {\mathcal G}$}%
{all the edges of $G$}

\medskip

For any positive integer $n$ we define ${\mathcal G}_n$ as the set of $n$-node graphs in $\mathcal G$.
There is an obvious lower bound for $Rb$, even for the  weak reconstruction problem  {\GRECO} and even in the  public-coin setting. In fact, 
$Rb= \Omega(\log|\mathcal{G}_n|/n)$. This can be easily seen if we note that, in 
the randomized case, there must be at least one outcome of the coin tosses for which the correct algorithm reconstructs the input graph  in at least  $(1-{\varepsilon})$ of the cases.  Therefore, $n+(n-1)Rb=\Omega((1-{\varepsilon})\log|\mathcal{G}_n|)=\Omega(\log|\mathcal{G}_n|)$. The value $(n-1)Rb+n$ corresponds to the total number of bits received by any 
node $v$ of the network: $(n-1)Rb$ bits are received from the other nodes and $n$ bits are known by $v$  at the beginning of the algorithm
(this is the indicator function of its neighborhood). This implies that
$Rb=\Omega(\log|\mathcal{G}_n|/n)$. In this paper we are going to prove that this bound is tight even 
with $R=1$ (if $\mathcal G$ is an hereditary class of graphs) and $R=2$ (in the general case).

We point
out that our reconstruction algorithms may be applied not only to $G$ itself but also  to some subgraph of $G$. For instance, consider the situation where we generate a new graph $H$ by performing (locally) a random sampling on the edges of $G$. Since $H$ typically belongs  to a smaller class of graphs (whp), reconstructing $H$ may result in an efficient strategy to infer some properties of $G$~\cite{spielman2011spectral}.

%There is an obvious lower bound for $Rb(n-1)+n$, even for the  weak reconstruction problem  {\GRECO} and even in the  public-coin setting. In fact, 
%$RB(n-1)+n = \Omega(\log|\mathcal{G}_n|)$. This can be easily seen if we note that, in 
%the randomized case, there must be at least one outcome of the coin tosses for which the correct algorithm reconstructs the input graph  in at least  $(1-{\varepsilon})$ of the cases. Therefore, $Rb(n-1)+n=\Omega((1-{\varepsilon})\log|\mathcal{G}_n|)=\Omega(|\log|\mathcal{G}_n|)$. This implies that
%$Rb=\Omega(\log|\mathcal{G}_n|/n)$. In this paper we are going to prove that this bound is tight ev . Moreover, we are going to prove 
%that  we can solve {\GRECO}, for any class of graphs $\mathcal G$,
%with bandwidth $b={\mathcal O}(\log|\mathcal{G}_n|/n)$ and $R=2$. 

\subsection{Our Results}

We start this paper by  studying a very natural family 
of graph classes known as  \emph{hereditary}. A class ${\mathcal G}$ is hereditary if,  for every graph $G \in \mathcal G$, every induced subgraph of $G$ also belongs to $\mathcal G$.  Many  graph classes  are hereditary: forests, planar graphs, bipartite graphs, $k$-colorable graphs,  
bounded tree-width graphs, $d$-degenerate graphs, etc.~\cite{brandstadt1999graph}. Moreover, any intersection class of graphs --such as interval graphs, chordal graphs, unit disc graphs, etc.-- is also
hereditary~\cite{brandstadt1999graph}.

In Section~\ref{sec:hereditary} we give, for every hereditary class of graphs $\mathcal{G}$, a one-round private-coin randomized algorithm 
that solves  {\GSRECO} with bandwidth $$\cO( \max_{k \in [n]} {\log |\mathcal{G}_k|}/{k} + \log n).$$
We emphasize that our algorithm runs in one-round, and therefore it runs in the \emph{broadcast congested clique}, 
a restricted version  of the congested clique model where, in every round, the $n-1$ messages sent by a node must be the same. (This equivalence will be explained in Section \ref{sec:prelim}). 
We also remark that for many hereditary graph classes, including all classes listed above, our algorithm is tight. Moreover, our result implies that {\GSRECO} can be solved in one-round with bandwidth $\cO(\log n)$ when $\mathcal{G}$ is the class of forests, planar graphs, interval graphs, unit-circle graphs, or any other hereditary graph class $\mathcal G$ such that 
$|{\mathcal G}_n| = 2^{\cO(n\log n)}$.

In Section  \ref{sec:general} we give a very general result, showing that two rounds are sufficient to solve {\GSRECO} in the congested clique model, for any set of graphs $\mathcal{G}$. More precisely, we provide a two-round deterministic algorithm that solves {\GRECO} and a two-round private-coin randomized algorithm that solves {\GSRECO} whp. We also give a three-round deterministic algorithm solving {\GSRECO}. All algorithms run using bandwidth $\cO( \log |\mathcal{G}_n|/n + \log n)$, so they are asymptotically optimal when $|\mathcal{G}_n| = 2^{\Omega(n\log n)}$.  

Our result implies, in particular, that {\GSRECO} can be solved in two rounds with bandwidth $\cO(\log n)$, when $\mathcal{G}$ is  any set of graphs of size $2^{\cO(n\log n)}$. The only property 
of the set of graphs $\mathcal{G}$  used by our algorithm is the cardinality of $\mathcal{G}_n$. Our algorithm
does not require  $\mathcal{G}$ to be closed under isomorphisms.

In Section \ref{sec:revisiting} we revisit the one-round case. We show that our general algorithm can be adapted to run in one round (i.e., in the broadcast congested clique model) by allowing a larger bandwidth, and then we show that this is tight. More precisely, we show that, for every set of graphs $\mathcal{G}$,  there is a one-round deterministic algorithm that solves {\GRECO}, and a one-round private-coin algorithm that solves  {\GSRECO} whp,  both of them using 
bandwidth  $\cO(\sqrt{ \log |\mathcal{G}_n| \log n} + \log n)$. 

Then we show that there are classes of graphs $\mathcal{G}$ satisfying that $|\mathcal{G}_n| \leq 2^{\cO(n)}$ such that every algorithm (deterministic or randomized) that solves  {\GRECO} in the broadcast congested clique model has cost $Rb=\Omega(\sqrt{\log |\mathcal{G}_n|})$. Therefore, with respect to the bandwidth, our general one-round algorithms for solving {\GRECO} and {\GSRECO} are tight (up to a logarithmic factor).

Our one-round algorithm that solves  {\GSRECO}  uses private coins. Is it possible to achieve the
same deterministically? Our last result gives a negative answer to this question. Consider, 
for a set of graphs $\mathcal G$,  the \emph{recognition problem} {\GRECON}, which consists in deciding whether the input graph $G$ belongs to $\mathcal{G}$. We show that there exists a set of graphs $\mathcal{S}$, satisfying $|\mathcal{S}_n| \leq 2^{n}$, such that any one-round deterministic algorithm that solves $\mathcal{S}$-{\sc Recognition}
 requires bandwidth $\Omega(n) = \Omega(\log |\mathcal{S}_n|)$. Clearly, the same lower-bound is valid for any deterministic algorithm that solves
 $\mathcal{S}$-{\sc Strong-Rec}. This is far from our bandwidth
 $\cO(\sqrt{n\log n})=\cO(\sqrt{ \log |\mathcal{G}_n| \log n} + \log n)$.

\subsection{Related Work}

All known results concerning the reconstruction of graphs have been obtained in the context of hereditary graph classes.
For instance,  let $\mathcal{G}$ be the class of {\emph{cograph}}, that is, the class of graphs that do not contain the 4-node path as an induced subgraph. This class is obviously hereditary. In~\cite{kari2015}, the authors presented a one-round public-coin algorithm that solves {\GSRECO} with bandwidth $\cO(\log n)$.  Note that $|{\mathcal G}_n| =\Theta(2^{n\log n})$. Therefore,  the result we get in this paper is stronger, because our one-round algorithm needs the same bandwidth but uses private coins.

In~\cite{BMN+11,MoTo16} it is shown that, if $\mathcal{G}$ is the class of \emph{$d$-degenerate} graphs, then there is a one-round deterministic algorithm that solves {\GSRECO} with bandwidth $\cO(d\log n)=\cO(\log n)$. A graph $G$ is $d$-degenerate if one can remove from $G$ a vertex $r$ of degree at most $d$, and then proceed recursively on the resulting graph $G' = G-r$, until obtaining the empty graph. Note that planar graphs (or more generally, bounded genus graphs), bounded tree-width graphs, graphs without a fixed graph $H$ as a minor, are all $d$-degenerate, for some constant $d>0$. Since  the class of $d$-degenerate graphs is hereditary and satisfies  $|{\mathcal G}_n| =\Theta(2^{n\log n})$, it follows, from this paper, the existence of a one-round private-coin randomized algorithm 
that solves  {\GSRECO} with bandwidth $\cO(\log n)$. However, the result of~\cite{BMN+11} for this particular class is stronger, since their algorithm is deterministic.

Another example of reconstruction with one-round algorithms  can be found in~\cite{drucker2014}. There, 
the authors consider the class of graphs defined by one forbidden subgraph $H$. They show that such classes  can be reconstructed deterministically 
with cost $Rb=\cO((ex(n,H)\log n)/n)$, where $ex(n,H)$ is the {\emph{Tur\'an number}}, defined as be the maximum number of edges in an $n$-node  graph  not containing an isomorphic copy of $H$ as a subgraph. For example, if $C_{4}$ is the cycle of length 4, then $ex(n,C_{4}) = \cO(n^{3/2})$. This implies that, if we define ${\mathcal G}$ as the class of graphs not containing $C_{4}$ as a subgraph, then there is a one-round deterministic algorithm that solves {\GSRECO}  with bandwidth $\cO(\sqrt{n}\log n)$. 
%Unfortunately, if $H$ is not bipartite, then $ex(n,H) = \Omega(n^2)$, so this algorithm gives a trivial upper bound. 

\section{Preliminaries}\label{sec:prelim}

\subsection{Some Graph Terminology}

Two graphs $G$ and $H$ are \emph{isomorphic} if there exists a bijection $\varphi: V(G) \rightarrow V(H)$ such that any pair of vertices $u,v$ are adjacent in $G$ if and only if $f(u)$ and $f(v)$ are adjacent in $H$. 
A {\emph{class of graphs}} $\mathcal{G}$ is a set of graphs which is {\it closed under  isomorphisms}, i.e., if $G$ belongs to $\mathcal{G}$ and $H$ is isomorphic to $G$, then $H$ also belongs to $\mathcal{G}$.  For a class of graphs $\mathcal{G}$ and $n>0$, we call $\mathcal{G}_n$ the subclass of $n$-node graphs  in $\mathcal{G}$.

For a graph $G=(V,E)$ and $U \subseteq V$ we denote $G[U]$ the subgraph of $G$ induced by $U$. More precisely,
the vertex set of $G[U]$ is $U$ and  the edge set consists of all of the edges in $E$ that have both endpoints in $U$.
A class of graphs $\mathcal{G}$ is \emph{hereditary} if it is closed under taking induced subgraphs, i.e.,  for every  $G=(V,E) \in \mathcal{G}$ and every $U \subseteq V$, the induced subgraph $G[U] \in \mathcal{G}$.

For a graph $G = (\{v_1, \dots, v_n\}, E)$, we call $A(G)$ its \emph{adjacency matrix}, i.e.,  the 0-1  square matrix of dimension $n$ where $[A(G)]_{ij}=1$ if and only if $v_i$ is adjacent to $v_j$. Let $M$ be a square matrix of dimension $n$, and let $i\in [n]=\{1,\ldots,n\}$. We call $M_i$ the $i$-th row of $M$. Let $N$  be another square matrix of dimension $n$. We denote by $d_r(M, N)$ the \emph{row-distance} between $M$ and $N$, that is, the number of rows that are different between $M$ and $N$. In other words, $d_r(M,N) = \{i \in [n] : M_i \neq N_i\}$. For $k>0$ and $G=(V,E)$, let us call $B(G,k)$ the set of all graphs $H = (V, E')$ such that $d_v(A(G), A(H)) = k$.

\subsection{One-Round Algorithms in the Congested Clique}

The \emph{broadcast congested clique} is a restricted version of the congested clique model where each node is forced, 
in each round, to send the same message through its $n-1$ communication links. But, if we consider one-round algorithms, the two models
are the same. In fact, suppose that there is a one-round algorithm $\mathcal{A}$ (deterministic or randomized) in the congested clique with bandwidth $b$. We can transform it into an algorithm $\mathcal{B}$ in the broadcast version with bandwidth $b+1$ as follows.
We fix a vertex, say the one with ID $1$, and every node $j$ broadcasts the message it would send to node $1$ on algorithm $\mathcal{A}$, plus one bit indicating whether node $j$ and node $1$ are adjacent in $G$. After this communication round of $\mathcal{B}$, every node knows the messages node $1$ would have received after the communication round of algorithm $\mathcal{A}$. 
Moreover, every node knows the neighborhood of node $1$. The result follows from the fact that, with this information, node 1 knows the output.
Obviously,  as we will see in this paper, when multi-round algorithms are considered, the broadcast congested clique model 
is much less powerful than the congested clique model. 

\subsection{Fingerprints}\label{subsec:fin}

The following technique, that we call \emph{fingerprints}, is based on a result known as the Schwartz Zippel Lemma, used in verification of polynomial identities~\cite{schwartz1980fast}. Let $n$ be a positive integer and $p$ be a prime number.  In the following, we denote by $\mathbb{F}_p$ the finite field of size $p$ (we refer to the book of  Lidl and Niederreiter \cite{lidl1994introduction} for further details and definitions involving finite fields). A polynomial $P \in \mathbb{F}_p[X]$ of \emph{degree} $d$ is an expression of the form $P(x) = \sum_{i=0}^d a_i x^i$, where $a_i \in \mathbb{F}_p$ and $a_i\neq 0$ for each $0 \leq i \leq d$. We denote by $\mathbb{F}_p[X]$ the polynomial ring on $\mathbb{F}_p$.  An element $b \in \mathbb{F}_p$ is called a \emph{root} of a polynomial $P \in \mathbb{F}_p[X]$ if $P(b)=0$.

Let $n$ be a positive integer, $p$ and $q$ be two prime numbers such that $q<n<p$. For each $a \in \mathbb{F}_q^n$ and $t \in \mathbb{F}_p$, consider the polynomial  $FP(a, \cdot) \in \mathbb{F}_p[X]$ defined as $$FP(a,t) = \sum_{i \in [n]} a_i t^{i-1}.$$  For $t \in \mathbb{F}_p$, we call $FP(a,t)$ the \emph{fingerprint} of $a$ and $t$. Note in the last expression that the coordinates of $a$ are interpreted as elements of $\mathbb{F}_p$. The following lemma is direct. Since the proof is very short we include it here.

\begin{lemma}~\cite{lidl1994introduction}\label{lem:fingerp}
Let $n$ be a positive integer, $p$ and $q$ be two prime numbers such that $q<n<p$. Let $a,b \in (\mathbb{F}_q)^n$  such that $a\neq b$. Then, $|\{ t \in \mathbb{F}_p : P(a,t) = P(b,t)\}| \leq n.$
\end{lemma}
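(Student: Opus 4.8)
The plan is to reduce the statement to the standard fact that a nonzero univariate polynomial over a field has at most as many roots as its degree. First I would form the difference polynomial $D(X) = FP(a,X) - FP(b,X) = \sum_{i \in [n]} (a_i - b_i) X^{i-1} \in \mathbb{F}_p[X]$, where each coefficient $a_i - b_i$ is computed in $\mathbb{F}_p$ after interpreting the coordinates of $a,b \in (\mathbb{F}_q)^n$ as elements of $\mathbb{F}_p$. The set $\{ t \in \mathbb{F}_p : FP(a,t) = FP(b,t)\}$ is exactly the set of roots of $D$ in $\mathbb{F}_p$.

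Next I would argue that $D$ is not the zero polynomial. Since $a \neq b$, there is some index $i \in [n]$ with $a_i \neq b_i$ as elements of $\mathbb{F}_q$. The key point is that the natural map sending the canonical representative $\{0,1,\dots,q-1\}$ of $\mathbb{F}_q$ into $\mathbb{F}_p$ is injective on $\{0,\dots,q-1\}$ simply because $q < p$, so two distinct elements of $\mathbb{F}_q$ map to two distinct elements of $\mathbb{F}_p$; hence $a_i - b_i \neq 0$ in $\mathbb{F}_p$. Therefore $D$ has at least one nonzero coefficient and is a nonzero polynomial of degree at most $n-1$.

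Finally, invoking the elementary fact that a nonzero polynomial of degree $d$ over a field has at most $d$ roots (here the field is $\mathbb{F}_p$ and $d \le n-1$), the number of roots of $D$ is at most $n-1 \le n$, which gives $|\{ t \in \mathbb{F}_p : FP(a,t) = FP(b,t)\}| \le n$ as claimed.

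The only subtle point — and the one I would be most careful about — is the passage from $a \neq b$ in $(\mathbb{F}_q)^n$ to a genuinely nonzero coefficient in $\mathbb{F}_p$; this is where the hypothesis $q < n < p$ is actually used (the condition $q < p$ suffices for this step, while $q < n < p$ is what the ambient fingerprinting application needs). Everything else is routine: the degree bound and the root-counting lemma are standard, so the proof is essentially two lines once the coefficient-nonvanishing observation is made explicit.
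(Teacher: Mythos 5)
Your proof is correct and follows essentially the same route as the paper: form the difference polynomial $FP(a,X)-FP(b,X)$ over $\mathbb{F}_p$ and bound its number of roots by its degree. You are in fact slightly more careful than the paper's two-line argument, since you explicitly justify that the difference polynomial is nonzero (the embedding of $\{0,\dots,q-1\}$ into $\mathbb{F}_p$ is injective because $q<p$) and you note the degree is at most $n-1$, whereas the paper tacitly assumes nonvanishing and states the degree bound as $n$.
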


\begin{proof}
 Note that $P(a,t) = P(b,t)$ implies that $P(a-b,t) = P(a,t)-P(b,t) = 0$. Since $P(a-b,t)$ is a polynomial of degree at most $n$ in $\mathbb{F}_p[X]$, it has at most $n$ roots in $\mathbb{F}_p$.  Therefore $|\{ t \in \mathbb{F}_p : P(a,t) = P(b,t)\}| \leq n$. 
\end{proof}

We extend the definition of fingerprints to matrices. Let $M$ be a square matrix of dimension $n$ and coordinates in  $\mathbb{F}_q$, and let $T$ be an element of $(\mathbb{F}_q)^n$. We call $FP(M, T) \in (\mathbb{F}_p)^n$ the \emph{fingerprint of $M$ and $T$}, defined as $FP(M,T) = (FP(M_1, T_1), \dots, FP(M_n, T_n))$, where $M_i$ is the $i$-th row of $M$, for each $i \in [n]$. Moreover, for a graph of size $n$, and $T\in (\mathbb{F}_p)^n$ we call $FP(G,T)$ the fingerprint of $A(G)$ and $T$.

\section{Reconstructing Hereditary Graph Classes in One Round}\label{sec:hereditary}

In this section we start giving the positive result. Later we explain the consequence of this result on well-known hereditary graph classes.

\begin{theorem}\label{theo:hereditaryexp}

Let $\mathcal{G}$ be an hereditary class of graphs. There exists a one-round private-coin algorithm that solves  {\GSRECO} whp and bandwidth $\cO(\max_{k\in [n]}(\log (|\mathcal{G}_k|)/ k) + \log n)$.

\end{theorem}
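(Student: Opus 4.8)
The plan is to have each node broadcast a small number of fingerprints of its own neighborhood row, together with an error-correcting encoding that lets every node reconstruct the whole adjacency matrix provided $G \in \mathcal{G}$, and reject otherwise. The key combinatorial observation driving the bandwidth bound is the following: if $\mathcal{G}$ is hereditary and $H \notin \mathcal{G}$, then for \emph{every} induced subgraph $H[U]$ with $|U| = k$ we have $H[U] \notin \mathcal{G}$, hence $H[U]$ is one of at most $\binom{n}{k}\cdot(\text{something}) $ ``forbidden'' $k$-node patterns; more usefully, the number of $n$-node graphs that are ``consistent'' with a given set of broadcast fingerprints and that lie in $\mathcal{G}$ is controlled by $|\mathcal{G}_n| \le \prod$-type bounds, but the real gain comes from bounding, row by row, how many candidate rows are compatible. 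Concretely I expect the argument to use that $|\mathcal{G}_k| \le 2^{O(k\cdot \beta)}$ where $\beta = \max_{k\in[n]}(\log|\mathcal{G}_k|)/k$, so each node's row, viewed inside the subgraph induced by any $k$ vertices containing it, ranges over a set whose log-size is $O(k\beta)$; choosing $k$ appropriately (e.g. $k = \Theta(\log n / \beta)$ or a constant-factor-of-$n$ slice) makes each row identifiable from $O(\beta + \log n)$ bits of fingerprint.

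The steps, in order. First I would fix primes $p, q$ with $q < n < p$ and $p = O(n)$ as in Section \ref{subsec:fin}, and have the shared-randomness-free protocol: each node $v$ picks private random $T_v \in \mathbb{F}_p$ and broadcasts $(T_v, FP(x_v, T_v))$ — wait, since coins are private and messages must eventually agree, instead each node $v$ broadcasts its row $x_v$ compressed via $O(\beta+\log n)$ random fingerprints $FP(x_v, t_1),\dots,FP(x_v,t_m)$ for independently chosen $t_j \in \mathbb{F}_p$, i.e. $m = O(\beta + \log n)$ evaluation points. Second, every node now holds, for every other node $w$, the list of pairs $(t_j, FP(x_w,t_j))$. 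Third — the reconstruction step — each node searches for a graph $H \in \mathcal{G}_n$ whose rows are all consistent with the received fingerprints; by Lemma \ref{lem:fingerp}, two distinct candidate rows agree on a uniformly random evaluation point with probability at most $n/p = O(1)$, actually $\le 1/2$ if we take $p \ge 2n$, so $m = O(\log|\mathcal{G}_n|)$ points suffice by a union bound over all of $\mathcal{G}_n$ — but that is too many points; the point of heredity is to union-bound row-by-row over the at most $2^{O(k\beta)}$ local patterns rather than globally, reducing $m$ to $O(\beta + \log n)$. Fourth, if such an $H$ exists and is unique (whp), output its edges; if no consistent $H \in \mathcal{G}_n$ exists, reject — this is correct because if $G \in \mathcal{G}_n$ then $G$ itself is consistent, and if $G \notin \mathcal{G}_n$ then no member of $\mathcal{G}_n$ can be row-consistent with $G$'s true fingerprints in a way that would fool all nodes, again using heredity to localize the discrepancy to a bounded-size induced subgraph that must already be outside $\mathcal{G}$.

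The main obstacle, and the part I would spend the most care on, is the union-bound/counting step that converts the global quantity $\log|\mathcal{G}_n|$ into the \emph{per-node} budget $\max_{k\in[n]}(\log|\mathcal{G}_k|)/k$. The naive approach — union bound over all of $\mathcal{G}_n$ — only gives bandwidth $O(\log|\mathcal{G}_n|/1)$ at a single node, which is a factor $n$ too large; one must exploit that a node only needs to disambiguate its \emph{own} row against rows arising in small induced subgraphs, and that heredity forces those local neighborhoods to live in $\mathcal{G}_k$ for the relevant $k$. I would set this up by partitioning $[n]$ into blocks of size $k$ (or considering all $k$-subsets through $v$), arguing that within each block the induced subgraph has at most $|\mathcal{G}_k|$ possibilities, so $v$'s row restricted to a block is one of at most $|\mathcal{G}_k|$ strings, hence $O((\log|\mathcal{G}_k|)/k)$ fingerprint evaluations per block plus $O(\log n)$ for block-indexing and error probability suffice; taking the max over $k$ (equivalently optimizing the block size) yields the claimed bandwidth. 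Verifying that the pieces glue into a \emph{globally} consistent $H$, and that rejection is sound when $G\notin\mathcal{G}$, is the remaining technical wrinkle but should follow from heredity applied to the union of two blocks.
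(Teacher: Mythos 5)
Your proposal contains a genuine gap, and it is located exactly at the step you flag as the one needing the most care. First, a smaller issue: you discard the ``one random point per node'' scheme on the grounds that coins are private, but that worry is unfounded --- each node can simply broadcast its own random evaluation point $t_i$ together with $FP(x_i,t_i)$, and this is in fact what the paper does. Your substitute (many evaluation points $t_1,\dots,t_m$ over a field of size $O(n)$, with $m=O(\beta+\log n)$ where $\beta=\max_k \log|\mathcal{G}_k|/k$) already costs $O((\beta+\log n)\log n)$ bits per node, a $\log n$ factor above the claimed bandwidth. More seriously, the mechanism you invoke to shrink the union bound --- ``union-bound row-by-row over the at most $2^{O(k\beta)}$ local patterns'' via a partition of $[n]$ into blocks of size $k$ --- does not give the stated budget. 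To disambiguate a single row of $v$ inside one block against up to $|\mathcal{G}_k|$ candidate patterns you need $\Omega(\log|\mathcal{G}_k|)=\Omega(k\beta)$ bits of fingerprint information (not $O((\log|\mathcal{G}_k|)/k)$ evaluations as you assert), and $v$ has $n/k$ blocks to cover, so this per-block disambiguation costs on the order of $n\beta$ bits per node --- hopelessly over budget. The block-gluing step you defer (``verifying that the pieces glue into a globally consistent $H$'') is a symptom of the same problem: no per-row or per-block argument can work within an $O(\beta+\log n)$ per-node budget, because a single row can carry far more conditional entropy than that.

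The paper's argument is global and uses heredity in a different place. Each node sends one fingerprint at one private random point, but over a \emph{large} prime $p\approx n^3 e\,2^{\beta}$, so the message is $O(\beta+\log n)$ bits. A wrong candidate $H\in\mathcal{G}_n$ at row-distance $k$ from the input $G$ (i.e.\ $H\in B(G,k)$) must pass $k$ independent single-point tests, hence survives with probability at most $(n/p)^k$; heredity enters only to bound the number of such candidates: $H$ is determined by $G$, the set $S$ of $k$ differing rows, and the induced subgraph $H[S]$, which lies in $\mathcal{G}_k$, so $|\mathcal{G}_n\cap B(G,k)|\le\binom{n}{k}|\mathcal{G}_k|$. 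Summing $(n/p)^k\binom{n}{k}|\mathcal{G}_k|$ over $k$ and choosing $p$ as above gives error $\le 1/n$. Note also that this counting holds for arbitrary $G$ (not only $G\in\mathcal{G}$), so soundness of rejection when $G\notin\mathcal{G}$ is immediate from the same bound; your alternative soundness argument (``localize the discrepancy to a bounded-size induced subgraph outside $\mathcal{G}$'') is not needed and, as stated, is not a proof. Without the distance-based counting $|\mathcal{G}_n\cap B(G,k)|\le\binom{n}{k}|\mathcal{G}_k|$ and the $(n/p)^k$ amplification across differing rows, your plan does not reach the bandwidth $\cO(\max_{k\in[n]}(\log|\mathcal{G}_k|)/k+\log n)$.
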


\begin{proof}
In the algorithm, nodes use a prime number $p$, whose value will be chosen later. The algorithm consists in: (1) Each node $i$ picks $t_i$ in $\mathbb{F}_p$ uniformly at random (using private coins), and computes $FP(x_i, t_i)$. (2) Each node communicates $t_i$ and $FP(x_i, t_i)$.  (3) Every node constructs $T= (t_1, \dots t_n)$ and $FP(G, T) = (FP(x_1, t_1), \dots, F(x_n, t_n))$ from the messages sent in the communication round. Finally:  (4) Every node looks in $\mathcal{G}_n$ for a graph $H$ such that $FP(H,T) = FP(G,T)$. If such graph $H$ exists, the algorithm outputs $H$, otherwise it {\it rejects}. The description of the algorithm is given in Algorithm~\ref{al:grecoheredi}.

\medskip

Now we aim to show that, if $H \in \mathcal{G}_n$ satisfies $FP(H,T) = FP(G,T)$, then $G=H$ whp. Let $T$ in $(\mathbb{Z}_p)^n$, picked uniformly at random. Then,
\begin{align*}
Pr(\exists H \in   \mathcal{G}_n \textrm{ s.t. }   H\neq G  & \textrm{ and }  FP(G,T) = FP(H,T))  \\ & \leq  \sum_{k \in [n]} Pr(\exists H \in \mathcal{G}_n \cap B(G,k) \textrm{ s.t. }   FP(G,T) = FP(H,T)).
\end{align*}
Suppose that $H \neq G$ and let $k>0$ such that $H$ belongs to $|B(G,k) \cap \mathcal{G}_n|$. Then, from Lemma~\ref{lem:fingerp}, we deduce that $Pr( FP(G,T) = FP(H,T) ) \leq \left( \frac{n}{p} \right)^k $.  It follows that  
$$Pr(\exists H \in   \mathcal{G}_n \textrm{ s.t. }  H\neq G  \textrm{ and }  FP(G,T) = FP(H,T)) \leq \left( \frac{n}{p} \right)^k \cdot |\mathcal{G}_n \cap B(G,k)|.$$

%Therefore, 
%\begin{align*}Pr(\exists H \in \mathcal{G}_n \textrm{ s.t. }  H\neq G  \textrm{ and }  FP(G,T) = FP(H,T))  & \leq  \sum_{k \in [n]}  \left( \frac{n}{p} \right)^k \cdot |\mathcal{G}_n \cap B(G,k)|  .
%\end{align*}
We now claim that  $|\mathcal{G}_n \cap B(G,k)| \leq {n \choose k} | \mathcal{G}_k|$.  Indeed, we can interpret a graph $H$ in $B(G,k)$ as a graph built by picking $k$ vertices $\{v_1, \dots v_k\}$ of $\mathcal{G}$ and then adding or removing edges between those vertices.  Since we are looking  for 
graphs in $|\mathcal{G}_n \cap B(G,k)|$, and $\mathcal{G}$ is hereditary, the graph induced by $\{v_1, \dots, v_k\}$ must belong to $\mathcal{G}_k$. Therefore, $|\mathcal{G}_n \cap B(G,k)| \leq {n \choose k} |\mathcal{G}_k|$. This claim implies that
\begin{align*}Pr(\exists H \in \mathcal{G}_n \textrm{ s.t. }  H\neq G  \textrm{ and }  FP(G,T) = FP(H,T))  & \leq  \sum_{k \in [n]}  \left( \frac{n^2 \cdot e \cdot (|\mathcal{G}_k|)^{1/k}}{p} \right)^k.
\end{align*}

Let $f: \mathbb{N} \rightarrow \mathbb{R}$ be defined as $f(n) = n \cdot \max_{k\in [n]} \frac{\log |\mathcal{G}_k|}{k}$. Note that this function is increasing, satisfies $f(n)/n \leq f(n+1)/(n+1)$, and $\log |\mathcal{G}_n| \leq f(n)$. 
 Therefore, $$Pr(\exists H \in \mathcal{G}_n \textrm{ s.t. }  H\neq G  \textrm{ and }  FP(G,T) = FP(H,T)) \leq \sum_{k \in [n]} \left( \frac{n^2 \cdot 2^{(f(n)/n)}}{p} \right)^k.$$

We now fix $p$ as the smallest prime number greater than  $n^3 \cdot e \cdot 2^{(f(n)/n)}$, and we deduce that 
$$Pr(\exists H \in \mathcal{G}_n \textrm{ s.t. }  H\neq G  \textrm{ and }  FP(G,T) = FP(H,T)) \leq \frac{1}{n}.$$ 

Then, with probability at least $1-1/n$, either $G=H$ or $F(H,T) \neq F(G,T)$,  for every $H\in \mathcal{G}_n$.  Hence, the algorithm solves {\GSRECO} whp.

Note that the bandwidth required by node $i$ in the algorithm equals the number of bits required to represent the pair  $(t_i, F(x_i, t_i))$, which are two integers in $[p]$. Therefore, the bandwidth of the algorithm is $$2 \lceil \log p \rceil = \cO(f(n)/n + \log n) = \cO(\max_{k\in [n]}(\log (|\mathcal{G}_k|)/ k) + \log n).$$ \end{proof}

\begin{algorithm}[htb]
\caption{{\GRECO} when $\mathcal{G}$ is hereditary. Algorithm executed by node $i$}
\label{al:grecoheredi}
\SetKwFor{Round}{Round}{}{}

	Compute $p$, the smallest prime greater than $n^3 \cdot e \cdot 2^{f(n)/n}$, where $f(n) = n\cdot \max_{k\in [n]}\frac{\log |\mathcal{G}_k|}{k}$  \;
	Pick $T_i \in \mathbb{F}_p$ uniformly at random using  private coins \;
	Compute $FP(x_i, T_i)$  \;
	Communicate $FP(x_i, T_i)$ and $T_i$  \;
	Receive $T = (T_1, \dots, T_n)$ and $FP(G,T)$ \;
	Look for $H\in \mathcal{G}_n$ such that $FP(H,T) = FP(G,T)$\;
	If $H$ exists and is unique, output $H$. Otherwise, \reject.

%\leIf {$H \subseteq G[A]$ for one of the graphs formed at the last round} {reject} {accept}
\end{algorithm}

\begin{corollary}\label{cor:hereditary}
Let $\mathcal{G}$ be an hereditary class of graphs, and $f$ be an increasing function such that $|\mathcal{G}_n| = 2^{\theta(nf(n))}$. Then, our private-coin algorithm solves {\GSRECO} whp, in one-round, with bandwidth $\Theta(\log|\mathcal{G}_n|/n + \log n)$. This matches the lower bound on the cost $Rb$ (which must be satisfied even in the public coin setting).
\end{corollary}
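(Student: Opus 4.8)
The plan is to derive the upper bound directly from Theorem~\ref{theo:hereditaryexp} and the matching lower bound from the generic counting argument given in the Introduction; the only genuine step is to check that, under the extra hypothesis on $\mathcal{G}$, the maximum over $k\in[n]$ appearing in Theorem~\ref{theo:hereditaryexp} collapses to its value at $k=n$. Recall that Theorem~\ref{theo:hereditaryexp} already supplies, for an arbitrary hereditary class, a one-round private-coin algorithm solving {\GSRECO} whp with bandwidth $\cO\bigl(\max_{k\in[n]}(\log|\mathcal{G}_k|/k)+\log n\bigr)$; this is exactly the algorithm whose optimality is being asserted in the corollary.

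For the collapse of the maximum, I would argue as follows. Reading the hypothesis $|\mathcal{G}_n|=2^{\Theta(nf(n))}$ as valid for every value of the vertex count, we get $\log|\mathcal{G}_k|=\Theta(kf(k))$, hence $\log|\mathcal{G}_k|/k=\Theta(f(k))$ with implied constants independent of $k$. Since $f$ is increasing, $\max_{k\in[n]}f(k)=f(n)$, so $\max_{k\in[n]}(\log|\mathcal{G}_k|/k)=\Theta(f(n))=\Theta(\log|\mathcal{G}_n|/n)$, using the hypothesis once more. Substituting into Theorem~\ref{theo:hereditaryexp} yields bandwidth $\cO(\log|\mathcal{G}_n|/n+\log n)$. (The $\max_k$ is present in Theorem~\ref{theo:hereditaryexp} precisely because $\log|\mathcal{G}_k|/k$ need not be monotone for a general hereditary class; the assumption that $f$ is increasing is exactly what restores monotonicity up to a constant factor.)

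The lower bound is the one already sketched in the Introduction: any algorithm, even with public coins and even for the promise problem {\GRECO}, must satisfy $n+(n-1)Rb=\Omega(\log|\mathcal{G}_n|)$, by counting the total number of bits reaching a node; hence $Rb=\Omega(\log|\mathcal{G}_n|/n)$, and with $R=1$, $b=\Omega(\log|\mathcal{G}_n|/n)$. Combined with the upper bound, this pins the bandwidth at $\Theta(\log|\mathcal{G}_n|/n+\log n)$, and the two bounds coincide at $\Theta(\log|\mathcal{G}_n|/n)$ whenever $\log|\mathcal{G}_n|/n=\Omega(\log n)$, i.e.\ whenever $|\mathcal{G}_n|=2^{\Omega(n\log n)}$.

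There is no real obstacle here --- the argument is essentially bookkeeping --- but the one point a careful reader should not skip is the additive $\log n$ term. It comes from transmitting a field element of size roughly $\mathrm{poly}(n)\cdot 2^{f(n)/n}$ (the random evaluation point together with the fingerprint value), and the generic lower bound does not certify that it is necessary when $f(n)=o(\log n)$. I would therefore state the optimality claim for the cost $Rb$, and at its sharpest for classes with $|\mathcal{G}_n|=2^{\Omega(n\log n)}$, where $b=\Theta(\log|\mathcal{G}_n|/n)$ and the lower bound is matched exactly.
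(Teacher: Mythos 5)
Your proposal is correct and follows essentially the same route as the paper: it collapses $\max_{k\in[n]}(\log|\mathcal{G}_k|/k)$ to $\Theta(f(n))=\Theta(\log|\mathcal{G}_n|/n)$ using the hypothesis $|\mathcal{G}_k|=2^{\Theta(kf(k))}$ together with the monotonicity of $f$, plugs this into Theorem~\ref{theo:hereditaryexp}, and invokes the counting lower bound $Rb=\Omega(\log|\mathcal{G}_n|/n)$ from the introduction. Your extra remark that the additive $\log n$ term is only certified optimal when $|\mathcal{G}_n|=2^{\Omega(n\log n)}$ is a fair clarification of the paper's optimality claim, but it does not change the argument.
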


\begin{proof}
We simply note the existence of constants $c_1, c_2>0$ such that: $$\max_{k\in [n]}(\log (|\mathcal{G}_k|)/ k) \leq c_2 \cdot \max_{k\in [n]} f(k) \leq c_2 \cdot  f(n) \leq (c_2/c_1)\cdot (\log(|\mathcal{G}_n|)/n).$$ 
Therefore,  the algorithm of Theorem \ref{theo:hereditaryexp}  uses bandwidth $\cO(\log(|\mathcal{G}_n|)/n)$.
 \end{proof}

In \cite{scheinerman1994size}, Scheinerman and Zito showed that  hereditary graph classes have a very specific growing rate. They showed (\cite{scheinerman1994size}, Theorem 1) that, for any hereditary class of graphs $\mathcal{G}$,  one of the following behaviors must hold:

 \begin{itemize}
 \item $|\mathcal{G}_n|$ is \emph{constant}; meaning that $|\mathcal{G}_n| \leq 2$ for all $n$ sufficiently large. 
 \item $|\mathcal{G}_n|$ is \emph{polynomial}, meaning that $|\mathcal{G}_n| = n^{\Theta(1)}$.
 \item $|\mathcal{G}_n|$  is \emph{exponential }, meaning that $|\mathcal{G}_n| = 2^{\Theta(n)}$.
 \item $|\mathcal{G}_n|$  is \emph{factorial }, meaning that $|\mathcal{G}_n| = 2^{\Theta(n \log n)}$.
  \item $|\mathcal{G}_n|$  is \emph{super-factorial }, meaning that $|\mathcal{G}_n| = 2^{\omega(n \log n)}$.
 \end{itemize}

Corollary \ref{cor:hereditary} implies that our algorithm is tight for any factorial hereditary class of graphs. For example, the class of forests, planar graphs, interval graphs, unit disc graphs, circle graphs, etc., are factorial. Therefore, the bandwidth required to reconstruct them in one-round is~$\Theta(\log n)$. Moreover, constant, polynomial and exponential hereditary classes can be also reconstructed with bandwidth $\cO(\log n)$.

Super-factorial hereditary classes of graphs might be more troublesome. Indeed, in \cite{BALOGH2001277} it is shown that there exist super-factorial hereditary classes $\mathcal{G}$ such that the succession $\log |\mathcal{G}_n|$ might oscillate, roughly, between $cn\log n$ and $n^{1+c'}$, for two constants $c, c'>0$. For these classes,  the upper bound given by our algorithm does not match the lower bound $\Omega(\log |\mathcal{G}_n|/n)$. We remark, however, that there are also super-factorial classes of graphs where our algorithm is non-trivial and tight. For example, if $\mathcal{G}$ is the class of chordal-bipartite graphs, we have that $|\mathcal{G}_n| = 2^{\Theta(n \log^2 n)}$. Therefore, they can be reconstructed in one-round with bandwidth~$\Theta(\log^2 n)$.

%\begin{corollary} There is an algorithm in the Congested Clique model solving \emph{Interval Graphs}-\textsc{Strong-Reconstruction} with high probability in one round and bandwidth $\cO(\log n)$.
%\end{corollary}
%
%\begin{proof}
%The result follows from the fact that the class of interval graphs $\mathcal{I}$ satisfies that $|\mathcal{I}_n| = 2^{\cO(n \log n)}$. Therefore, $\max_{k\in n} \frac{ \log|\mathcal{I}_k|}{k} = \cO(\log n)$, so our algorithm requires bandwidth $\cO(\log n)$.  
%\end{proof}

\section{Reconstructing Arbitrary Graph Classes in Two Rounds}\label{sec:general}

In this section we show that there exists a two-round private-coin algorithm in the congested clique model that solves  {\GSRECO} whp and bandwidth $\mathcal{O}(\log|\mathcal{G}_n| /n+\log n)$.  Our algorithm is based,  roughly, on  the same ideas used to reconstruct hereditary classes of graphs. 
But the problem we encounter is the following: while in the case of hereditary classes of graphs, we had for every graph $G$ and $k>0$, a bound on the number of graphs contained in $B(G,k) \cap \mathcal{G}_n$, this is not the case in an arbitrary family of graphs $\mathcal G$. Therefore, fingerprints alone are not able to differentiate graphs. To cope with this obstacle, we use Error Correcting Codes. 

\subsection{Error Correcting Codes}

Consider the following technique, introduced by Reed and Solomon \cite{reed1960polynomial}, originally used to produce safe communication in a noisy channel. 
(This technique has also been used in randomized protocols for multiparty communication complexity \cite{fischer2016public}).

\begin{definition}
Let $0\leq k \leq n$, and let $q$ be the smallest prime number greater that $n+k$.  An \emph{error correcting code with parameters $(n, k)$} is a mapping $C: \{0,1\}^n \rightarrow (\mathbb{F}_q)^{n+k}$, satisfying:
 \begin{itemize}
  \item[1)] For every $x\in \{0,1\}^n$ and $i \in [n]$,  $C(x)_i = x_i$.
 \item[2)]  For each $x, y \in \{0,1\}^n$, $x \neq y$ implies $ |\{i \in [n+k] : C(x)_i \neq C(y)_i\}| \geq k$. 
 \end{itemize}
\end{definition}

For sake of completeness, we give the construction of an error correcting code with parameters $(n,k)$. For $x\in \{0,1\}^n$, let $P_x$ be the unique polynomial in $\mathbb{F}_q[X]$ satisfying $P_x(i) = x_i$ for each $i \in [n]$.  The function $C$ is then defined as $C(x) = (P_x(1), \dots, P_x({n+k}))$. This function satisfies both property $(1)$ from the definition of $P_x$, and property (2) because two different polynomials of degree $n$ can be equal in at most $n-1$ different values. 

We now adapt the definition of error correcting codes to graphs. 

\begin{definition}
 For a graph $G$, we call $C(G)$ the square matrix of dimension $n+k$ with elements in $\mathbb{F}_q$ defined as follows. 
\begin{itemize}
\item For each $i \in [n]$, the $i$-th row of $C(G)$ is $C(A(G)_i) \in (\mathbb{F}_q)^{n+k}$ (recall that $A(G)_i$ is the $i$-th row of the adjacency matrix of $G$). 
\item For each $i \in [k]$, the $(n+i)$-th row of $C(G)$ is the vector $(C(x_1)_{n+i}, \dots, C(x_n)_{n+i}, \vec{0}) \in (\mathbb{F}_q)^{n+k}$, where $\vec{0}$ is the zero-vector of $\mathbb{F}_q^d$, and $C(x)_{j} \in \mathbb{F}_q$ is the $j$-th element of $C(x)$.
\end{itemize}
\end{definition}

We can represent $C(x)$ as a pair $(x, \tilde{x})$, where $\tilde{x}$ belongs to $(\mathbb{F}_q)^k$. Similarly, for a graph $G$, we can represent $C(G)$ as the matrix: $$C(G) = \left[ \begin{array}{cc} A(G) & \tilde{A(G)} \\  \tilde{A(G)}^T & 0  \end{array} \right].$$
where $\tilde{A(G)}$ is the matrix with rows $C(A(G)_i)_{n+1}, \dots, C(A(G)_i)_{n+k}$, $i \in [n]$. Note that $C(G)$ is symmetric.

\medskip

\begin{remark} Note that  $d_r(C(G), C(H)) >k$, for every two different $n$-node graphs $H$ and $G$. Indeed, if $G\neq H$, there exists $i\in[n]$ such that $A(G)_i$ is different than $A(H)_i$. Then, by definition of $C$, $ |\{j \in [n+k] : C(A(G))_{i,j} \neq C(A(H))_{i,j}\}| > k$. This means that $d_r(C(G), C(H)) > k$, because $C(G)$ and $C(H)$ are symmetric matrices.
\end{remark}

\subsection{Optimal Reconstruction of Arbitrary Graph Classes in Two Rounds}

\begin{lemma}\label{lem:exT}
Let $\mathcal{G}$ be a set of graphs, $C$  the error correcting code with parameters $(n,k)$, and let $p$ be the smallest prime number greater than $(n+k) \cdot |\mathcal{G}_n|^{2/{k}}$. Then, there exists $T\in (\mathbb{F}_p)^{n+k}$ depending only on $\mathcal{G}$, satisfying $FP(C(G),T) \neq FP(C(H),T)$  for all different $G, H \in \mathcal{G}_n$.
\end{lemma}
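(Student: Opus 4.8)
The plan is to use the probabilistic method: I will show that a vector $T$ drawn uniformly at random from $(\mathbb{F}_p)^{n+k}$ has the required property with positive probability, so at least one good $T$ must exist; since being ``good'' is a property determined entirely by $\mathcal{G}_n$, $C$ and $p$, one may then fix, say, the lexicographically smallest good $T$, so the chosen $T$ depends only on $\mathcal{G}$.

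First I would fix a pair of distinct graphs $G,H \in \mathcal{G}_n$ and bound the probability that their fingerprints collide. By the Remark preceding this lemma, $C(G)$ and $C(H)$ are symmetric $(n+k)$-dimensional matrices with $d_r(C(G),C(H)) > k$; let $I \subseteq [n+k]$ be the set of row indices on which they differ, so $|I| \geq k+1$. Now $FP(C(G),T) = FP(C(H),T)$ forces $FP(C(G)_i, T_i) = FP(C(H)_i, T_i)$ for every $i \in I$. For a fixed $i \in I$ the rows $C(G)_i$ and $C(H)_i$ are distinct vectors of length $n+k$ with entries in $\mathbb{F}_q$; lifting these entries to $\{0,\dots,q-1\} \subseteq \mathbb{F}_p$ (which is faithful provided $q \le p$, something I would check from the choices of $p$ and $q$ together with Bertrand's postulate), their difference is a nonzero vector over $\mathbb{F}_p$, so $FP(C(G)_i, X) - FP(C(H)_i, X)$ is a nonzero polynomial of degree at most $n+k-1$. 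Exactly as in the proof of Lemma~\ref{lem:fingerp}, but now with vectors of length $n+k$, it has at most $n+k$ roots in $\mathbb{F}_p$, hence $Pr(FP(C(G)_i, T_i) = FP(C(H)_i, T_i)) \le (n+k)/p$. Since the coordinates of $T$ are chosen independently,
\[
Pr\big(FP(C(G),T) = FP(C(H),T)\big) \;\le\; \left(\frac{n+k}{p}\right)^{|I|} \;\le\; \left(\frac{n+k}{p}\right)^{k}.
\]

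Then I would take a union bound over all $\binom{|\mathcal{G}_n|}{2} < |\mathcal{G}_n|^2$ unordered pairs of distinct graphs in $\mathcal{G}_n$, obtaining
\[
Pr\big(\exists\, G \neq H \in \mathcal{G}_n : FP(C(G),T) = FP(C(H),T)\big) \;<\; |\mathcal{G}_n|^2 \left(\frac{n+k}{p}\right)^{k}.
\]
Plugging in $p > (n+k)\,|\mathcal{G}_n|^{2/k}$ gives $(n+k)/p < |\mathcal{G}_n|^{-2/k}$, hence $\left((n+k)/p\right)^{k} < |\mathcal{G}_n|^{-2}$, and the whole right-hand side is strictly below $1$. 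Therefore some $T \in (\mathbb{F}_p)^{n+k}$ avoids every collision, which is precisely the claim; the degenerate cases $|\mathcal{G}_n| \le 1$ are trivial.

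The main obstacle I expect is the bookkeeping around the two fields: one must make sure that passing from the $\mathbb{F}_q$-valued rows of $C(G)$ to the evaluation field $\mathbb{F}_p$ does not accidentally collapse a nonzero row difference into the zero polynomial --- i.e.\ that $q \le p$ so residues embed injectively --- and that the zero-padding in the last $k$ rows of $C(G)$ is handled correctly when invoking the Remark (so that the guarantee $d_r(C(G),C(H)) > k$ really is available). Everything else --- the per-pair estimate, the union bound, and the arithmetic with the prime $p$ --- is a routine adaptation of the argument already carried out in Theorem~\ref{theo:hereditaryexp}.
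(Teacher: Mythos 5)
Your proof is correct and follows essentially the same route as the paper's: invoke the Remark that $d_r(C(G),C(H))>k$, bound each pair's collision probability by $\left(\frac{n+k}{p}\right)^{k}$ via the fingerprint/root-counting argument, union bound over at most $|\mathcal{G}_n|^2$ pairs, and conclude by the choice of $p$ that a good $T$ exists. Your extra care about embedding $\mathbb{F}_q$-entries into $\mathbb{F}_p$ (which holds since $q\le p$ whenever $|\mathcal{G}_n|\ge 2$) only makes explicit a detail the paper glosses over when citing Lemma~\ref{lem:fingerp}.
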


\begin{proof}
From the remark at the end of the last subsection, we know that $d_r(C(G), C(H)) >k$, for every two different $n$-node graphs $H$ and $G$. Then, if we pick $T\in (\mathbb{F}_p)^{n+k}$ uniformly at random we have from Lemma \ref{lem:fingerp}:
$$Pr(FP(C(G),T) = FP(C(H),T)) < \left( \frac{n+k}{p}\right)^k.$$

Then, by the union bound
$$Pr(\exists G,H \in \mathcal{G}_n  \textrm{ s.t. } G\neq H  \textrm{ and } FP(C(G),T) = FP(C(H),T)) < \left( \frac{n+k}{p}\right)^k \cdot |\mathcal{G}_n|^2 \leq 1.$$

The last inequality follows from the choice of $p$. Therefore, there must exist a $T\in (\mathbb{F}_p)^{n+k}$ such that $ FP(C(G),T) \neq FP(C(H),T)$, for all different $G,H \in \mathcal{G}_n$.
\end{proof}

\begin{theorem}\label{teo:smallclassuclique} Let $\mathcal{G}$ be a set of graphs. The following holds:
\begin{itemize}

\item[1)]  There exists a two-round deterministic algorithm in the congested clique model that solves {\GRECO} with bandwidth $\cO(\log|\mathcal{G}_n| /n + \log n )$.

\item[2)]  There exists a three-round deterministic algorithm in the congested clique model that solves {\GSRECO} with bandwidth $\cO(\log|\mathcal{G}_n| /n + \log n )$.

\item[3)]  There exists a two-round private-coin algorithm in the congested clique model that solves {\GSRECO} with bandwidth $\cO(\log|\mathcal{G}_n| /n + \log n )$ whp.
\end{itemize}

\end{theorem}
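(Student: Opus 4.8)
The plan is to build all three algorithms on top of Lemma~\ref{lem:exT}, using the fact that the probe vector $T \in (\mathbb{F}_p)^{n+k}$ separating all pairs of graphs in $\mathcal{G}_n$ depends \emph{only} on $\mathcal{G}$ and $n$, hence can be hard-wired into every node before the computation starts. Fix $k = \lceil \log|\mathcal{G}_n| \rceil$ (or a small constant multiple thereof), so that $\log p = \cO(\log|\mathcal{G}_n|/n + \log n)$; then a single fingerprint value $FP(C(G)_j, T_j) \in \mathbb{F}_p$ fits in the target bandwidth. The subtlety is that the $j$-th row of $C(G)$, for $j \in [n]$, is the Reed--Solomon encoding $C(x_j)$ of node $j$'s own neighborhood vector, so node $j$ can compute $C(G)_j$ locally and hence $FP(C(G)_j, T_j)$; but the last $k$ rows of $C(G)$ (indices $n+1,\dots,n+k$) are ``parity-check'' rows whose entry in column $i$ is $C(x_i)_{n+\cdot}$, i.e. a function of node $i$'s input, so no single node can compute those rows alone. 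This is why one round does not suffice and why the algorithms differ.

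For part~(1), the weak reconstruction problem, I would proceed in two rounds as follows. In round~1, every node $i$ broadcasts (equivalently, sends to everyone) the $k$ values $C(x_i)_{n+1}, \dots, C(x_i)_{n+k}$ together with $FP(C(x_i), T_i)$ (the fingerprint of its own row); this costs $\cO(k \log p) = \cO(k(\log|\mathcal{G}_n|/n + \log n))$ bits --- wait, that is too much, so instead the round-1 message must be split across the $n-1$ links: node $i$ sends to node $n+j$'s ``representative'' only what is needed. The cleaner route: in round~1 each node $i$ sends the single value $C(x_i)_{n+j}$ to node $j$, for every $j \in [k]$ (using $k \le n$ of its links), plus its own-row fingerprint $FP(C(x_i),T_i)$ to a fixed node. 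After round~1, node $j \in [k]$ holds the entire $(n+j)$-th row of $C(G)$ and can compute $FP(C(G)_{n+j}, T_{n+j})$; in round~2 every node $i\in[n]$ broadcasts $FP(C(x_i), T_i)$ and each node $j \in [k]$ broadcasts $FP(C(G)_{n+j}, T_{n+j})$, so that after round~2 every node knows the full fingerprint vector $FP(C(G), T) \in (\mathbb{F}_p)^{n+k}$. By Lemma~\ref{lem:exT} this vector uniquely identifies $G$ among all graphs in $\mathcal{G}_n$, so every node searches $\mathcal{G}_n$ (exhaustively, using unbounded local computation) for the unique $H$ with $FP(C(H),T) = FP(C(G),T)$ and outputs its edge set. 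Since the input is promised to lie in $\mathcal{G}$, such $H$ exists and equals $G$.

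For part~(3), the strong reconstruction problem in two rounds, the issue is that a graph $G \notin \mathcal{G}_n$ might nevertheless produce a fingerprint vector colliding with that of some $H \in \mathcal{G}_n$ --- Lemma~\ref{lem:exT} only controls collisions \emph{within} $\mathcal{G}_n$. To detect this, randomization is used on top of the deterministic scheme: in addition to the fixed probe $T$, each node $i$ picks a private random $s_i \in \mathbb{F}_{p'}$ for a suitably large prime $p'$, and the nodes also compute a random fingerprint $FP(C(G), S)$ of the full encoded matrix with $S = (s_1,\dots,s_{n+k})$ (the last $k$ coordinates chosen by nodes $1,\dots,k$ after round~1, as above). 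At the end every node has both $FP(C(G),T)$ and $FP(C(G),S)$; it finds the candidate $H \in \mathcal{G}_n$ matching on $T$ (unique if it exists), and then \emph{verifies} $FP(C(H),S) = FP(C(G),S)$ using the random probe. If $G \notin \mathcal{G}_n$ but it collided with some $H$ on $T$, then $C(G) \neq C(H)$, so they differ in at least one row, and by Lemma~\ref{lem:fingerp} the random probe catches the discrepancy with probability $\ge 1 - (n+k)/p'$; taking $p' = \mathrm{poly}(n)$ large enough makes the error $\cO(1/n)$ while keeping $\log p' = \cO(\log n)$. If no candidate matches, or verification fails, the node rejects. Part~(2) is the deterministic three-round version of this: a third round is used to have every node broadcast the candidate $H$'s encoded parity rows so that the whole matrix $C(G)$ (not just its fingerprint) can be recovered and compared against $C(H)$ exactly --- actually simpler, in the third round the nodes exchange enough information to deterministically certify membership, e.g. each node $i$ checks $C(H)_i = C(x_i)$ locally and broadcasts one accept/reject bit.

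The main obstacle I expect is bookkeeping the bandwidth of the first round: each node has $k$ parity-contributions and one own-fingerprint to disseminate, and $k$ may be as large as $\Theta(n\log n)$, so these cannot all be broadcast; they must be routed so that each of the $n-1$ links carries only $\cO(\log|\mathcal{G}_n|/n + \log n)$ bits, which forces the ``node $j$ collects parity row $n+j$'' structure and requires $k \le n$ --- fine when $|\mathcal{G}_n| = 2^{\cO(n\log n)}$ but needing a slightly more careful argument (splitting each parity row across several collector nodes) in general; and one must double-check that the round-2 broadcast of $n+k$ field elements, one per node/collector, also fits, which it does since each is a single element of $\mathbb{F}_p$. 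The second mild obstacle is making precise, in part~(3), that the random verification is over the randomness of the $s_i$ only and that privacy of coins suffices because the colliding pair $(G,H)$ is fixed before the coins are drawn.
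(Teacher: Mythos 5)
Your architecture is the same as the paper's: hard-wire the separating probe $T$ of Lemma~\ref{lem:exT}, disseminate the Reed--Solomon parity symbols in round one so that designated nodes hold the extra rows of $C(G)$, broadcast the $T$-fingerprints in round two, and handle {\GSRECO} either by a one-bit neighborhood vote in a third round (deterministic) or by an additional private-coin fingerprint check of the candidate $H$ (two rounds, one-sided error, with $H$ fixed independently of the random probe). However, there is a genuine gap in your bandwidth accounting, and it sits exactly at the point you flagged. You choose $k=\lceil \log|\mathcal{G}_n|\rceil$ to make $\log p=\cO(\log n)$, but the theorem is for an arbitrary set $\mathcal{G}$, where $\log|\mathcal{G}_n|$ can be as large as $\binom{n}{2}=\Theta(n^2)$ (not $\Theta(n\log n)$). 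Then $k\gg n$, each node holds $k$ parity symbols of $\Theta(\log n)$ bits each (the alphabet $\mathbb{F}_q$ needs $q>n+k$), and any routing of them over $n-1$ links --- including your ``split each parity row across several collectors'' fix --- costs $\Theta\bigl(k\log n/n\bigr)=\Theta\bigl(\log|\mathcal{G}_n|\cdot\log n/n\bigr)$ bits per link in round one (and again for the $\lceil k/n\rceil$ parity-row fingerprints per collector in round two), which exceeds the claimed $\cO(\log|\mathcal{G}_n|/n+\log n)$ by a $\log n$ factor whenever $\log|\mathcal{G}_n|=\omega(n)$.

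The missing idea is to balance the trade-off between $k$ and $\log p$ the other way, as the paper does: take $k=n$, so that node $j$ is the unique collector of parity row $n+j$ and round one sends exactly one $\mathbb{F}_q$-symbol per link with $q=\Theta(n)$, i.e.\ $\cO(\log n)$ bits; the price is that $p$ must exceed $2n\cdot|\mathcal{G}_n|^{2/n}$, so $\log p=\cO(\log|\mathcal{G}_n|/n+\log n)$ --- but this expensive object is broadcast only twice per node in round two (own-row and parity-row fingerprints), which is exactly the allowed bandwidth. With $k=n$ your parts (1)--(3) then go through verbatim; note also that for part (3) it suffices to fingerprint $A(G)$ itself with the private probes (each node broadcasts $S_i$ and $FP(x_i,S_i)$ in round two), rather than the full encoded matrix $C(G)$, which simplifies the timing of the collector-chosen coordinates you worry about.
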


\begin{proof} The first algorithm we are going to explain here,  Algorithm \ref{al:greco}, is deterministic and solves  {\GRECO} with bandwidth $\cO(\log|\mathcal{G}_n| /n + \log n)$.  The algorithms for (2) and (3) are slight modifications of Algorithm \ref{al:greco} and will also be explained in this proof.

\medskip 
 
1) Let $p$ be the first prime greater than $2n \cdot |\mathcal{G}_n|^{2/{n}}$ (then $p\leq 4n\cdot |\mathcal{G}_n|^{2/{n}}$), and let $q$ be the smallest prime number greater than $2n$.  In the algorithm, node $i$  first computes $C(x_i)$, where $C$ is the error correcting code with parameters $(n,n)$. Then, for each $j \in [n]$ node $i$ communicates $C(x_i)_{j+n}$ to node~$j$.  This communication round requires bandwidth $\lceil \log q \rceil = \cO(\log n)$.
After the first communication round, node $i$ knows $C(x_i)$ and $(C(x_1)_{i+n}, \dots, C(x_n)_{i+n})$, i.e., it knows rows $i$ and $i+n$ of matrix $C(G)$. Each node computes a vector $T  \in (\mathbb{F}_p)^{2n}$ such that $FP(C(G),T) \neq FP(C(H),T)$, for all different $G, H \in \mathcal{G}_n$ (each node computes the same $T$). The existence of $T$ is given by Lemma \ref{lem:exT}.
Then, node $i$ communicates (broadcasts) $P(C(G)_i, T_i)$ and $P(C(G)_{i+n}, T_{i+n})$. This communication round requires bandwidth $2\lceil \log p \rceil = \cO( (\log |\mathcal{G}_n|)/n + \log n )$.
After the second communication round, each node knows $P(C(G),T)$. Then,  they locally  compute the unique $H \in \mathcal{G}_n$ such that $P(C(H),T) = P(C(G),T)$. Since $G$ belongs to $\mathcal{G}_n$, then necessarily $G = H$. 

\medskip

2) Suppose now that we are solving {\GSRECO}. In this case $G$ does not necesarily belong to $\mathcal{G}_n$. After receiving the fingerprints of $C(G)$,  nodes look for a graph $H$ in $\mathcal{G}_n$ that satisfies $F(C(G),T) = F(C(H),T)$ (line 9 in Algorithm \ref{al:greco}). If such a graph exists, we call it a \emph{candidate}. Otherwise, every node decides that $G$ is not  in $\mathcal{G}_n$, so they \emph{reject}. Note that, if the candidate exists, then it is unique, since $P(C(H_1),T) \neq P(C(H_2),T)$ for all different $H_1$, $H_2$ in $\mathcal{G}_n$. 
So, if the candidate $H$ exists, each node $i$ checks whether the neighborhood of vertex $i$ on $G$ and $H$ are equal, and announces the answer in the third round (communicating one bit). If every node announces affirmatively, then they output  $G=H$. Otherwise, it means that $G$ is not  in $\mathcal{G}_n$, so every node \emph{rejects}.

\medskip

3) We now show that, if we allow the algorithm to be randomized, then we can spare the third round. In fact, nodes only need to  run Algorithm \ref{al:fingerp}  after the first round of Algorithm \ref{al:greco}. Let us explain this now. Let $p' \in [n^2, 2n^2]$ be a prime number. In the second round, node~$i$ picks $S_i \in \mathbb{F}_p$, and it communicates,  together with $FP(C(G)_i, T_{i})$ and $FP(C(G)_{i+n}, T_{i+n})$, also~$S_i$. 
After the second round of communication, if a candidate $H \in \mathcal{G}_n$ exists, each node computes $S = (S_1, \dots, S_n)$, $FP(G,S) = (FP(x_1, S_1), \dots, F(x_n, S_n)$.  If $F(G, S) = F(H,S)$, then nodes deduce that $G=H$. Otherwise, they deduce that $G\notin \mathcal{G}_n$ and \emph{rejects}.
Note that if $G$ belongs to $\mathcal{G}_n$, then the algorithm always give the correct answer. Otherwise, it rejects whp. Indeed, if $G\notin \mathcal{G}_n$, then $H\neq G$, and from Lemma \ref{lem:fingerp}, $Pr(FP(G,T) = FP(H,T)) \leq~1/n$. 
\end{proof}
\begin{algorithm}[htb]
\caption{{\GRECO}. Algorithm executed by node $i$}
\label{al:greco}
\SetKwFor{Round}{Round}{}{}

	Compute $C(x_i)$, where $C$ is the error-correcting-code with parameters $(n,n)$\;
	Communicate the element $n+j$ of $C(x_i)$ to player $j$ \;
	
	Receive $C(x_1)_{n+i}, \dots, C(x_n)_{n+i}$\;
	Call $C(x_{i+n}) = (C(x_1)_{n+i}, \dots, C(x_n)_{n+i}, \vec{0})$, where $\vec{0}$ is the zero vector of $(\mathbb{F}_p)^n$\; 
	Compute $p$ as the smallest prime greater than $2n\cdot |\mathcal{G}_n|^{2/n}$\;
	Compute $T$, the vector in $\mathbb{F}_p^{2n}$, given by Lemma \ref{lem:exT} \;
	Compute and communicate (broadcast) $FP(C(x_i), T_i)$ and $FP(C(x_{n+i}), T_{n+i})$\;
	Receive $FP(C(G),T)$\;
	Look for $H\in \mathcal{G}_n$ such that $FP(C(H),T) = FP(C(G),T)$\;
	Output $H$. 

%\leIf {$H \subseteq G[A]$ for one of the graphs formed at the last round} {reject} {accept}
\end{algorithm}

\vspace{-0.25cm}
\begin{algorithm}[htb]
\caption{Checking a candidate $H$. Algorithm executed by node $i$}
\label{al:fingerp}
\SetKwFor{Round}{Round}{}{}

	Compute $p'$, the smallest prime number such that $p'>n^2$\;
	Pick $T_i \in \mathbb{F}_p$ uniformly at random using private coins \;
	Compute $FP(x_i, T_i)$  \;
	Communicate $FP(x_i, T_i)$ and $T_i$  \;
	Receive $T = (T_1, \dots, T_n)$ and $FP(G,T)$ \;
	Output $H$ if $FP(H,T) = FP(G,T)$, otherwise \reject .
\end{algorithm} 

Note that our private-coin algorithm for {\GSRECO} has one-sided error. In fact, if the input graph belongs to $\mathcal{G}$, then our algorithm reconstructs it with probability $1$. On the other hand, if $G$ is not contained in $\mathcal{G}$, then our algorithm fails to discard the candidate with probability at most $1/n$. 

%
%\begin{lemma}
%Let $\mathcal{G}$ be a class of graphs. Let $n,k>0$, $q$ the smallest prime number greater than $n+k$, and let $C$ be the error correcting code with parameters $n$ and $k$. Fix $p$ a prime number greater than $q$. Then, if we pick $T \in ({\mathbb{F}_p})^{n+k}$ uniformly at random, for every pair $G, H$ of different graphs  of size $n$, 
%$$Pr(FP(C(G),T) = FP(C(H),T)) \leq \left( \frac{n}{p} \right)^k$$
%
%\end{lemma}
%
%
%\begin{proof}
%Note that  $d_r(C(G), C(H)) >k$, for every two different $n$-nodes graphs $H$ and $G$. Indeed, if $G\neq H$, then there exists $i\in[n]$ such that $A(G)_i$ is different than $A(H)_i$. Then, by definition of $C$, $ |\{j \in [n+k] : C(A(G))_{i,j} \neq C(A(H))_{i,j}\}| > k$. This means that $d_r(C(G), C(H)) > k$, because $C(G)$ and $C(H)$ are symmetric matrices.  Then, from Lemma \ref{lem:fingerp} we deduce that for any two graphs $G, H$ of size $n$, $$Pr(FP(C(G),T) = FP(C(H),T) \leq \left( \frac{n}{p} \right)^k .$$
%The result follows from the union bound.
%\end{proof}

\section{Revisiting the One Round Case}\label{sec:revisiting}

In this section we revisit the one-round case (and therefore the broadcast congested clique model). But instead of studying hereditary graph classes
we study arbitrary graph classes, and we show that for this general case we need a larger bandwith.  Our results are tight, not only in terms of the bandwidth,
but also in the necessity of using randomization.

%In this section, we first show that if we admit a larger bandwidth, then we can adapt the two-round algorithm of Theorem \ref{teo:smallclassuclique} to solve in one-round (hence in the Broadcast Congested Clique model) {\GRECO} with a deterministic algorithm, and {\GSRECO} with high probability with a private-coins algorithm. Later, we show that our algorithms are tight, from the point of view of bandwidth, but also from the point of view of the necessity of private coins.

\begin{theorem}\label{theo:general1R} Let $\mathcal{G}$ be a set of graphs. The following holds:
\begin{itemize}

\item[1)]  There exists a one-round deterministic algorithm in the congested clique model that solves {\GRECO} with bandwidth $\cO(\sqrt{ \log |\mathcal{G}_n| \log n} + \log n )$.

\item[2)]  There exists a two-round deterministic algorithm in the broadcast congested clique model that solves {\GSRECO} with cost $\cO(\sqrt{ \log |\mathcal{G}_n| \log n} + \log n )$.

\item[3)]  There exists a one-round private-coin algorithm in the congested clique model that solves {\GSRECO} with bandwidth $\cO(\sqrt{ \log |\mathcal{G}_n| \log n} + \log n )$ whp.
\end{itemize}
\end{theorem}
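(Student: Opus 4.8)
The plan is to reduce the one-round (broadcast) case to the two-round algorithm of Theorem~\ref{teo:smallclassuclique} by a ``square-root tradeoff'': instead of sending each node's full row of the encoded matrix $C(G)$, we split the communication into blocks and pay a $\sqrt{\cdot}$ factor in bandwidth to compress two rounds into one. Concretely, let $k$ be a parameter to be tuned (it will turn out that $k=\Theta(\sqrt{\log|\mathcal{G}_n|/\log n}\,)$ is the right choice). Recall from the two-round algorithm that the only use of the first round was to let node $i$ learn row $i+n$ of $C(G)$, i.e.\ the column of ``parity symbols'' $(C(x_1)_{n+i},\dots,C(x_n)_{n+i})$, and the second round broadcast the fingerprints of rows $i$ and $i+n$. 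In one round of the broadcast model, each node can only send a single message of $b$ bits that everybody hears. The key observation is that with an error correcting code $C$ with parameters $(n,k)$ for a suitably small $k$, the redundancy $\tilde A(G)$ has only $k$ columns, each entry lying in $\mathbb{F}_q$ with $q=\cO(n)$; so the \emph{entire} matrix $\tilde A(G)$ contains only $nk$ field elements, i.e.\ $\cO(nk\log n)$ bits total, and can be broadcast if every node contributes $\cO(k\log n)$ bits (node $i$ broadcasts its own row $C(x_i)_{n+1},\dots,C(x_i)_{n+k}$). Once every node knows all of $\tilde A(G)$, it knows all of $C(G)$ except the unknown block $A(G)$; together with the fingerprints $FP(C(G),T)$, which cost only $\cO(\log|\mathcal{G}_n|/k + \log n)$ bits per node by Lemma~\ref{lem:exT} (with $p$ the smallest prime exceeding $(n+k)\,|\mathcal{G}_n|^{2/k}$), each node can locally search $\mathcal{G}_n$ for the unique $H$ with $FP(C(H),T)=FP(C(G),T)$, and by Lemma~\ref{lem:exT} and the remark preceding it that $H$ equals $G$ whenever $G\in\mathcal{G}_n$. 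This proves~(1).

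For the bandwidth, node $i$ broadcasts $\cO(k\log n)$ bits for its slice of $\tilde A(G)$ plus $\cO((\log|\mathcal{G}_n|)/k + \log n)$ bits for its fingerprint entries, for a total of $\cO\!\left(k\log n + (\log|\mathcal{G}_n|)/k + \log n\right)$. Optimizing over $k$: choosing $k=\big\lceil\sqrt{(\log|\mathcal{G}_n|)/\log n}\,\big\rceil$ (and noting $k\ge 1$ always, $k\le n$ whenever $\log|\mathcal G_n|\le n^2\log n$, which we may assume since otherwise the claimed bound is trivial) balances the first two terms at $\cO(\sqrt{\log|\mathcal{G}_n|\log n}\,)$, giving total bandwidth $\cO(\sqrt{\log|\mathcal{G}_n|\log n} + \log n)$ as required. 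All of this is in the broadcast model, hence a fortiori in the congested clique, establishing~(1).

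For~(2), when $G$ is an arbitrary graph (not promised to be in $\mathcal G$), we run the one-round algorithm of~(1); if no candidate $H\in\mathcal{G}_n$ with $FP(C(H),T)=FP(C(G),T)$ exists, every node rejects. If the (necessarily unique) candidate $H$ exists, we add a second broadcast round in which node $i$ announces one bit: whether its neighborhood in $G$ agrees with its row in $H$. Every node outputs $G=H$ iff all announcements are affirmative, and rejects otherwise. Correctness is immediate: if $G\in\mathcal{G}_n$ the candidate is $G$ itself and all bits are affirmative; if $G\notin\mathcal{G}_n$ then $G\neq H$, so some node detects a discrepancy. The extra round costs only one bit per node, so the cost stays $\cO(\sqrt{\log|\mathcal{G}_n|\log n} + \log n)$.

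For~(3), we instead spare the second round using private coins, exactly as in part~(3) of Theorem~\ref{teo:smallclassuclique}: in the single round, node $i$ additionally picks $S_i\in\mathbb{F}_{p'}$ uniformly at random (with $p'\in[n^2,2n^2]$ prime) and broadcasts $S_i$ and $FP(x_i,S_i)$, costing an extra $\cO(\log n)$ bits. After the round, if the candidate $H$ exists, each node checks whether $FP(G,S)=FP(H,S)$; if so it outputs $G=H$, otherwise it rejects. If $G\in\mathcal{G}_n$ this always outputs $G$ correctly; if $G\notin\mathcal{G}_n$ then $G\neq H$ and by Lemma~\ref{lem:fingerp} the fingerprints collide with probability at most $n/p'\le 1/n$, so the algorithm rejects whp. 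The total bandwidth remains $\cO(\sqrt{\log|\mathcal{G}_n|\log n} + \log n)$.

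The one genuine subtlety --- and the step I expect to require the most care --- is the bookkeeping around the error correcting code with parameters $(n,k)$ for $k$ \emph{sub-linear} in $n$: one must check that the construction via the degree-$\le n$ interpolating polynomial $P_x$ still gives distance $\ge k$ on the first $n$ coordinates (it does, since two distinct degree-$\le n$ polynomials agree in $<n$ points, and we read off $n+k$ evaluations), that the modulus $q$ (smallest prime $>n+k$) stays $\cO(n)$, and that the prime $p$ in Lemma~\ref{lem:exT} satisfies $\log p = \cO((\log|\mathcal{G}_n|)/k + \log n)$ --- this last point is exactly why the $1/k$ appears in the fingerprint cost and is what makes the tradeoff work. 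Everything else is a routine adaptation of the arguments already given for the two-round case.
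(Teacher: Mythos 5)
Your proposal is correct and follows essentially the same route as the paper: an error correcting code with parameters $(n,k)$ whose $k$ parity symbols per node are broadcast directly (at $\cO(k\log n)$ bits) alongside the Lemma~\ref{lem:exT} fingerprint (at $\cO(\log|\mathcal{G}_n|/k+\log n)$ bits), with $k=\Theta\bigl(\sqrt{\log|\mathcal{G}_n|/\log n}\,\bigr)$ balancing the two terms, and with the extra verification round (for the deterministic strong version) or the extra private-coin fingerprint (for the one-round strong version) handled exactly as in Theorem~\ref{teo:smallclassuclique}. Your added bookkeeping on the code's distance for sub-linear $k$ and on the range of $k$ is a harmless refinement of the same argument.
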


\begin{proof}
The algorithm in this case is very similar to the one we provided in the proof of Theorem \ref{teo:smallclassuclique}. Let $k$ be a parameter
whose value will be chosen at the end of the proof,  and let $C$ be the error-correcting-code with parameters $(n,k)$. Let $p$ be the smallest prime number greater than $2n\cdot |\mathcal{G}|^{2/k}$. Let $T \in (\mathbb{F}_p)^{n+k}$ be the vector given by Lemma \ref{lem:exT}, corresponding to $\mathcal{G}$. 

In the algorithm, every node $i$ computes $C(x_i)$, and communicates $FP(C(x_i), T_i)$ together with $C(x_i)_{n+1}, \dots, C(x_i)_{n+k} \in (\mathbb{F}_q)^k$, where $q$ is the smallest prime greater than $k+n$. Note that the communication round requires bandwidth $$\cO( \log p  + k \cdot \log(n+k)) = \cO( \log |\mathcal{G}_n| / k + (k+1)\cdot \log n).$$

After the communication round, every node knows $FP(C(x_i),T_i)$, for all $i\in [n]$, and also knows the matrix $\tilde{A(G)}$. Therefore, every node can compute $F(C(x_i), T_i)$, for all $i\in \{n+1, \ldots,n+k\}$, and, moreover, compute $F(C(G),T)$. 
 
From the construction of $T$, there is at most one graph $H \in \mathcal{G}_n$ such that $F(C(G),T) = F(C(H),T)$. Therefore, if $G$ belongs to $\mathcal{G}$, every node can reconstruct it. On the other hand, if we are solving {\GSRECO}, then we proceed as in the algorithm of Theorem \ref{teo:smallclassuclique}, either testing whether $H=G$ in one more round, or sending a fingerprint of $G$ to check with high probability if a candidate $H\in \mathcal{G}_n$ such that $F(C(G),T) = F(C(H),T)$ is indeed equal to $G$. This verification requires to send $\cO(\log n)$ more bits, which fits in the asymptotic bound of the bandwidth.

The optimal value of $k$, that is, the one which minimizes the bandwidth, is such that  $k = \cO\left(\sqrt{\frac{\log |\mathcal{G}_n|}{\log n}}\right)$. 
Threfore, the bandwidth is $\cO(\sqrt{ \log |\mathcal{G}_n| \log n} + \log n )$. 
\end{proof}

%\begin{remark}Note that the proof of Theorem \ref{theo:general1R} implies that there exist a two-round deterministic algorithm in the Broadcast Congested Clique model that solves {\GSRECO} with cost $\cO(\sqrt{ \log |\mathcal{G}_n| \log n} + \log n )$. Indeed, in the second round the nodes (deterministically) check if a candidate $H\in \mathcal{G}$ such that $FP(C(H),T) = FP(C(H),T)$ equals $G$ (see the proof of Theorem \ref{teo:smallclassuclique} (2))
%\end{remark}

%A direct consequence of Theorem \ref{theo:general1R} is the optimal reconstruction of  hereditary classes of graphs that grow polynomialy.
%
%\begin{corollary}
%Let $\mathcal{G}$ be an hereditary class. There exist a one-round private-coins randomized algorithm in the Congested Clique model that solves {\GSRECO} with bandwidth  $\cO(\log |\mathcal{G}_n|/ n + \log n)$ and high probability.
%\end{corollary}
%
%\begin{proof}
%If $|\mathcal{G}_n|$ is at least exponential, then the result follows from Theorem \ref{theo:hereditaryexp}. Otherwise, $|\mathcal{G}|_n$ is at most polynomial, i.e. $|\mathcal{G}_n| = \mathcal{O}(n^c)$ for some $c>0$ (\cite{xx}, Theorem~1). The result follows from Theorem \ref{theo:general1R}.
%\end{proof}

\subsection{Tightness of our Algorithms}

In this subsection we show that our algorithms for solving {\GRECO} and {\GSRECO} are tight, from two different perspectives. First, from the point of view of the bandwidth, we show that there are classes of graphs $\mathcal{G}$ satisfying $|\mathcal{G}_n| \leq 2^{\cO(n)}$ such that every algorithm (deterministic or randomized) solving {\GRECO} in the broadcast congested clique model has cost $Rb=\Omega(\sqrt{\log |\mathcal{G}_n|})$. This lower bound matches the upper {\emph{one-round}} bound given in Theorem \ref{theo:general1R} (up to logarithmic factors). 

Then, we show  that, when  restricted to one-round algorithms, the use of randomization is necessary in order  to have non-trivial general algorithms solving {\GSRECO}. Indeed, we prove that there exists a set of graphs $\mathcal{G}$ satisfying $|\mathcal{G}_n|\leq 2^n$ such that, every one-round deterministic algorithm that solves  {\GSRECO}, requires bandwidth $\Omega(n)$.

\begin{theorem}
There exists a class of graphs $\mathcal{G}$ satisfying   $|\mathcal{G}_n|\leq 2^{\cO(n)}$  such that,  any $\epsilon$-error public-coin  
algorithm in the broadcast congested clique model that solves  {\GRECO}, has cost $Rb = \Omega(\sqrt{n}) = \Omega(\sqrt{\log |\mathcal{G}_n|})$. 

\end{theorem}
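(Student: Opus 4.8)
The goal is to exhibit a graph class $\mathcal{G}$ with $|\mathcal{G}_n| \leq 2^{\mathcal{O}(n)}$ on which the broadcast congested clique needs cost $\Omega(\sqrt{n})$ to solve {\GRECO}, even with public coins. The natural strategy is a reduction from two-party communication complexity: the standard lower-bound technique in the broadcast congested clique is to split the $n$ nodes into two halves $A$ and $B$, give Alice the rows of the adjacency matrix indexed by $A$ and Bob the rows indexed by $B$, and observe that after $R$ broadcast rounds of bandwidth $b$ the total communication crossing the $A$/$B$ cut is $\mathcal{O}(Rbn)$ bits; hence a problem requiring $C$ bits of two-party communication forces $Rb = \Omega(C/n)$. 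To get $Rb = \Omega(\sqrt{n})$ we therefore need a class whose reconstruction encodes an $\Omega(n^{3/2})$-bit communication problem while still having only $2^{\mathcal{O}(n)}$ labeled $n$-vertex graphs.

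**Choice of the hard instance.** I would encode the bipartite graph between $A$ and $B$ using a structured family so that the bipartite adjacency matrix is not arbitrary (that would give $2^{\Theta(n^2)}$ graphs), but rather built from roughly $\sqrt{n}$ ``blocks'', each block being an arbitrary subset of a fixed small gadget. Concretely, partition $B$ into $\sqrt{n}$ groups of size $\sqrt{n}$; for each vertex $a \in A$ and each group, Alice's input specifies one of $2^{\mathcal{O}(\sqrt n)}$ possible neighborhood patterns — but crucially we let only a bounded number of choices be ``active'', so that the total description length of the cross edges is $\Theta(n^{3/2})$ bits while the whole graph is still describable in $\mathcal{O}(n)$ bits (e.g. each of the $n$ vertices carries an $\mathcal{O}(1)$-length pointer/label, or the edge set is forced to be the disjoint union of $\mathcal{O}(\sqrt n)$ complete bipartite pieces whose endpoints are encoded by $\mathcal{O}(\sqrt n)$ labels of $\mathcal{O}(\log n)$ bits each, giving $2^{\mathcal{O}(\sqrt n \log n)} = 2^{\mathcal{O}(n)}$ — I would tune the gadget so the counting comes out to $2^{\mathcal{O}(n)}$ exactly). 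The point is that reconstructing $G$ forces every node, in particular nodes in $B$, to learn all the cross edges, which is an $\Omega(n^{3/2})$-bit message from the $A$-side, so $Rb \cdot n = \Omega(n^{3/2})$, i.e. $Rb = \Omega(\sqrt n) = \Omega(\sqrt{\log|\mathcal{G}_n|})$.

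**Carrying out the reduction.** First I would fix the bipartition $A \cup B$ with $|A| = |B| = n/2$ and define $\mathcal{G}$ as all $n$-vertex graphs whose edges all cross the cut and whose cross-adjacency matrix lies in the structured family $\mathcal{F}$ just described; I would verify $|\mathcal{G}_n| = 2^{\Theta(n)}$ by a direct count of $\mathcal{F}$ (and note $\mathcal{G}$ is closed under isomorphism by also allowing all relabelings, which only multiplies the count by $n! = 2^{\mathcal{O}(n\log n)}$ — here I must be careful, since $2^{\mathcal{O}(n\log n)}$ would spoil the $2^{\mathcal{O}(n)}$ bound; the fix is to define the family so that the *unordered* structure already has $\log$-sized cardinality contribution, or equivalently to make the vertex roles recoverable from the graph so closure under isomorphism costs nothing asymptotically). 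Next, the communication lower bound: I would argue that $\mathcal{F}$ contains an $\Omega(n^{3/2})$-bit ``fooling set'' or simply note $|\mathcal{F}| = 2^{\Omega(n^{3/2})}$ restricted to graphs supported on a fixed vertex labeling where Alice controls the cross edges — wait, that contradicts $|\mathcal{G}_n| = 2^{\mathcal{O}(n)}$, so instead the $\Omega(n^{3/2})$ must come from the *joint* input of Alice and Bob, not Alice alone. The correct setup is a two-player problem where Alice gets half the cross-edge description and Bob the other half, and the number of $(\text{Alice input},\text{Bob input})$ pairs is $2^{\Theta(n^{3/2})}$ even though each resulting graph is describable in $\mathcal{O}(n)$ bits — reconstruction still requires Bob to learn Alice's $\Omega(n^{3/2})$ bits. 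Then by the cut argument $Rbn = \Omega(n^{3/2})$. Since public randomness does not help a reconstruction task that must succeed with probability $\ge 1-\varepsilon$ on every input (an averaging/Yao argument: fix the best coin string, as already used in the $\Omega(\log|\mathcal{G}_n|/n)$ bound in the introduction), the bound holds against public-coin $\varepsilon$-error algorithms.

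**Main obstacle.** The delicate point is making the two requirements — $|\mathcal{G}_n| = 2^{\mathcal{O}(n)}$ and an embedded $\Omega(n^{3/2})$-bit two-party problem — simultaneously consistent, since naively a graph carrying $n^{3/2}$ bits of information has $2^{\Omega(n^{3/2})}$ instances. The resolution is that the *graph itself* only ranges over $2^{\mathcal{O}(n)}$ possibilities, but the *input distribution to the two players* (which column of the cross-matrix each player ``owns'') is what carries the extra entropy; Bob must output the whole graph, hence must recover Alice's part, hence $\Omega(n^{3/2})$ bits cross the cut regardless. Getting this bookkeeping exactly right — designing a concrete gadget family $\mathcal{F}$, counting it to confirm $2^{\Theta(n)}$ graphs, and pinning down the $\Omega(n^{3/2})$ communication bound (likely via a reduction from {\sc Index} or from a direct sum of $\sqrt n$ independent $\Theta(n)$-bit reconstruction subproblems) — is the technical heart of the proof; everything else (the cut argument, derandomization via fixing the best coin string) is routine and already sketched in the introduction.
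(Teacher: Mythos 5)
Your plan hinges on a step that cannot work: embedding an $\Omega(n^{3/2})$-bit two-party problem into a class with $|\mathcal{G}_n|\leq 2^{\cO(n)}$. In the cut setup you describe, the pair (Alice's rows, Bob's rows) determines the input graph and is determined by it, so the number of legal joint inputs is exactly $|\mathcal{G}_n|\leq 2^{\cO(n)}$; conditioned on any input of Bob, Alice's half ranges over at most $2^{\cO(n)}$ possibilities, and Alice can always transmit her half with $\cO(n)$ bits (an index into the set of Alice-sides consistent with membership in $\mathcal{G}_n$). Hence the two-party complexity of reconstruction is $\cO(\log|\mathcal{G}_n|)=\cO(n)$, and your cut argument ($nRb\geq C$) can never yield more than $Rb=\Omega(1)$. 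You noticed this tension yourself (``wait, that contradicts...''), but the proposed fix --- a joint input space of size $2^{\Theta(n^{3/2})}$ while each graph is describable in $\cO(n)$ bits, with Bob forced to learn $\Omega(n^{3/2})$ bits of Alice's input --- is precisely the impossible configuration: Bob's uncertainty about Alice's side is bounded by the entropy of the graph itself, i.e.\ $\cO(n)$ bits. So the approach, as stated, has no way to produce the claimed $\Omega(\sqrt{n})$ bound, and no tuning of the gadget family can repair it.

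The paper's proof uses a different mechanism: padding and self-reduction rather than a crossing-cut bottleneck. Take $\mathcal{G}^+$ to be the graphs consisting of an arbitrary graph on $\lceil\sqrt{n}\rceil$ vertices plus isolated vertices; then $|\mathcal{G}^+_n|\leq \binom{n}{\lceil\sqrt{n}\rceil}2^{\binom{\lceil\sqrt{n}\rceil}{2}}=2^{\cO(n)}$. Given a public-coin algorithm $\mathcal{A}$ solving $\mathcal{G}^+$-\textsc{Weak-Rec} on $N$-node inputs with cost $Rb$, the $n$ nodes holding an \emph{arbitrary} $n$-node graph $G$ simulate $\mathcal{A}$ on the padded graph $G^+$ with $N=n^2$ nodes: the padded nodes are isolated, so their inputs (all-zero vectors) and hence their broadcasts are computable by every real node for free --- this is where the broadcast restriction is used --- and only the $n$ real broadcasts cost communication. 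Each real node then reconstructs $G$ after receiving $(n-1)Rb$ bits plus its own $n$-bit input, and since arbitrary $n$-node graphs carry $\binom{n}{2}$ bits, $Rb=\Omega(n)=\Omega(\sqrt{N})$. The strength of the bound comes from the mismatch between the per-link budget of an $N$-node algorithm and the fact that only $\sqrt{N}$ nodes carry any information, not from a two-party information bottleneck; this is the idea missing from your proposal.
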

\newpage
\begin{proof}
Let $\mathcal{G}^+$ be the class of graphs defined as follows: $G$ belongs to $\mathcal{G}_n^+$ if and only if $G$ is the disjoint union of a graph $H$ of $\lceil \sqrt{n} \rceil$ nodes and $n - |H|$ isolated nodes. Note that $|\mathcal{G}^+_n| = {n \choose \lceil \sqrt{n} \rceil} \cdot 2^{{\lceil \sqrt{n} \rceil} \choose 2} \leq 2^{\cO(n)}$. Indeed, there are $2^{{\lceil \sqrt{n} \rceil} \choose 2} = 2^{\cO(n)}$ labeled graphs of size ${\lceil \sqrt{n} \rceil}$, and at most ${n \choose \lceil \sqrt{n} \rceil} = 2^{\cO(\sqrt{n}\log n)}$ different labelings of a graph of $\sqrt{n}$ nodes using $n$ labels (so $\mathcal{G}^+$ is closed under isomorphisms). 

Let  $\mathcal{A}$ be an $\epsilon$-error public-coin algorithm  solving $\mathcal{G}^+$-{\sc Weak-Rec} in $R(n)$ rounds and bandwidth $b(n)$, on input graphs of size $n$. 

Consider now the following algorithm $\mathcal{B}$ that solves $\mathcal{U}$-{\sc Weak-Rec}, where $\mathcal{U}$ is the set of all graphs: on input graph $G$ of size $n$, each node $i \in [n]$ supposes that it is contained in a graph $G^+$ formed by $G$ plus $n^2-n$ isolated vertices with identifiers $(n+1), \dots, n^2$. Note that $G^+$ belongs to $\mathcal{G}^+$. Then, node $i$ simulates $\mathcal{A}$  as follows: at each round, node $i\in [n]$ produces the message of node $i$ in $G^+$ according to $\mathcal{A}$. Note that the messages produced by nodes labeled $(n+1), \dots, n^2$ do not depend on $G$, so they can be produced by any node of~$G$.  Since $\mathcal{A}$ solves $\mathcal{G}^+$-{\sc Weak-Rec}, at the end of the algorithm every node knows all the edges of $G^+$, so they reconstruct $G$ ignoring vertices labeled $(n+1), \dots, n^2$. 

We deduce that algorithm $\mathcal{B}$ solves $\mathcal{U}$-{\sc Weak-Rec}. Note that the cost of $\mathcal{B}$ is $n^2R(n)b(n)$ on input graphs of size $n$. We deduce that $n^2R(n)b(n)= \Omega(n)$, i.e., the cost of $\mathcal{A}$ is $\Omega(\sqrt{n})$.  
\end{proof}

We say that an algorithm \emph{recognizes} $\mathcal{G}$ if the algorithm decides whether an input graph $G$ belongs to $\mathcal{G}$. We call {\GRECON} the problem of recognizing  $\mathcal{G}$.

\begin{theorem}
There exists a set of graphs $\mathcal{G}$ satisfying  $|\mathcal{G}_n|\leq 2^n$ such that, and any one-round deterministic algorithm 
in the congested clique model that solves {\GRECON}, requires bandwidth $\Omega(n)$.
\end{theorem}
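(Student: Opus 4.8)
The plan is to construct $\mathcal{G}$ via a counting/probabilistic argument against one-round deterministic protocols, exploiting the fact that in one round (equivalently, in the broadcast congested clique, by the reduction in Section~\ref{sec:prelim}) the transcript seen by every node is determined by the list of $n$ messages $m_1(x_1), \dots, m_n(x_n)$, where $m_i$ is the message function of node $i$ and $x_i \in \{0,1\}^n$ is its neighborhood indicator. The key structural observation is that if the bandwidth is $b$, then each $m_i$ is a function $\{0,1\}^n \to \{0,1\}^b$, so node $i$ partitions its $2^n$ possible inputs into at most $2^b$ classes; two input graphs $G, G'$ that agree on the message of every node produce the same transcript, and hence the algorithm must give the same accept/reject decision on both. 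So to force large bandwidth it suffices to exhibit a set $\mathcal{G}$ with $|\mathcal{G}_n| \le 2^n$ such that any attempt to separate $\mathcal{G}_n$ from its complement by a product of $n$ low-entropy message functions fails.

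First I would set up the adversary. Fix the message functions $m_1, \dots, m_n$ with $b$ bits each (there are only finitely many, but more importantly we argue for an arbitrary fixed choice). I would like to find a graph $G^{\mathrm{yes}} \in \mathcal{G}_n$ and a graph $G^{\mathrm{no}} \notin \mathcal{G}_n$ that are "message-indistinguishable": $m_i(x_i(G^{\mathrm{yes}})) = m_i(x_i(G^{\mathrm{no}}))$ for all $i$. The natural way to get a rich supply of indistinguishable pairs is a cut-and-paste argument: if rows $i$ and $j$ of two graphs can be swapped while keeping each node's message fixed, we generate many graphs with a common transcript, most of which cannot all be inside $\mathcal{G}_n$ (since $|\mathcal{G}_n| \le 2^n$ is tiny compared to $2^{\binom n2}$) unless $b$ is large. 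Concretely, I would choose $\mathcal{G}$ to be a carefully designed code-like family — e.g., graphs encoding a string $s \in \{0,1\}^n$ in a redundant, "spread-out" way (each bit of $s$ influencing many edges, via something like the bipartite-style construction used in the previous theorem's $\mathcal{G}^+$, or an expander-based incidence pattern) — so that flipping one coordinate of $s$ changes the neighborhoods of $\Theta(1)$ nodes but in a way that any single node, seeing only its own row, cannot detect with fewer than $\Omega(n)$ bits.

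The cleanest route is a direct dimension argument. Suppose $b = o(n)$. Consider the $2^n$ graphs $\{G_s : s \in \{0,1\}^n\}$ of $\mathcal{G}_n$, where $G_s$ is the codeword graph for string $s$. Node $i$'s message function restricted to these graphs is a map $\{0,1\}^n \to \{0,1\}^b$; by pigeonhole, if $b$ is small, there exist $s \ne s'$ with $m_i(x_i(G_s)) = m_i(x_i(G_{s'}))$ for every $i$ simultaneously — provided the code is chosen so that the rows of $G_s$ and $G_{s'}$ "mix": I would design $\mathcal{G}$ so that for a suitable set of strings, one can splice the row-$i$ data of $G_s$ with the row-$j$ data of $G_{s'}$ and land on a graph $G''$ which is \emph{not} in $\mathcal{G}_n$ (because it is not a valid codeword) yet shares every node's message with a genuine codeword. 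Then the deterministic algorithm cannot distinguish $G''$ from that codeword and errs. Making the numbers work forces $b = \Omega(n)$, and since $\log|\mathcal{G}_n| \le n$ this is $\Omega(\log|\mathcal{G}_n|)$, establishing the gap against the $\cO(\sqrt{n\log n})$ one-round upper bound of Theorem~\ref{theo:general1R}.

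The main obstacle is the joint indistinguishability requirement: it is easy to fool one node, but the transcript is fooled only if \emph{all} $n$ nodes are simultaneously fooled by the same pair of graphs, and the spliced graph $G''$ must genuinely fall outside $\mathcal{G}_n$. Balancing these — a code redundant enough that local splices leave $\mathcal{G}_n$, yet structured enough that a common small-bandwidth transcript exists across all nodes — is the delicate design step, and I expect it to rest on choosing $\mathcal{G}$ as something like the set of graphs whose adjacency matrix is the incidence structure of a linear code or an expander, where a Plotkin-type / counting bound gives exactly the $\Omega(n)$ lower bound while keeping $|\mathcal{G}_n| \le 2^n$.
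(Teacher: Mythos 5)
There is a genuine gap: what you have written is a plan whose crucial step is left undone, and the step that is missing is exactly the content of the theorem. First, the pigeonhole claim does not work as stated: the joint transcript $(m_1(x_1(G_s)),\dots,m_n(x_n(G_s)))$ lives in a space of size $2^{nb}$, which already for $b\geq 1$ is at least as large as your $2^n$ codewords, so pigeonhole gives you a collision at a single node but never simultaneous collisions at all $n$ nodes; obtaining a pair (or a spliced graph) that fools \emph{every} node at once is precisely the ``delicate design step'' you defer. Second, the splicing itself is problematic in this model: rows of an adjacency matrix are not independent (the matrix must stay symmetric), so taking row $i$ from $G_s$ and row $j$ from $G_{s'}$ need not yield a graph, and even when it does you must prove both that it leaves $\mathcal{G}_n$ and that it preserves all $n$ messages against an \emph{arbitrary} adversarially chosen family of message functions that may depend on your code. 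Third, no explicit family is actually exhibited, and it is far from clear that a linear-code or expander-based family would work: highly structured families can admit very cheap recognition even when reconstruction is impossible (the paper's own example of split graphs, recognizable from degree sequences with $\cO(\log n)$ bits, illustrates this), so ``code-like'' structure cuts both ways.

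The paper sidesteps all of these difficulties with a short non-constructive counting argument: there are at least $2^{c_1 n^2 2^n}$ sets consisting of $2^n$ graphs on $n$ vertices, while a one-round deterministic algorithm of bandwidth $\beta$ is determined by its message functions (at most $2^{n\beta 2^n}$ choices) together with an acceptance function that accepts at most $2^n$ transcripts (at most $2^{c_2 n\beta 2^n}$ choices); since distinct sets require distinct deterministic recognizers, $\beta=\Omega(n)$ for some set, which is then chosen for each $n$. If you want to rescue your approach you would need to actually produce the family and prove the simultaneous-fooling property against all bandwidth-$o(n)$ message functions, which is a substantially harder (and here unnecessary) task than the counting argument; alternatively, note that your counting instincts in the last sentence point in the right direction, but they must be applied to the space of \emph{algorithms versus families}, not to collisions among codewords of a single family.
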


\begin{proof}
We prove this theorem by a counting argument. Our goal is to show that there are more \emph{small} sets of graphs than one-round deterministic algorithms capable to recognize them.

 We first count the number of sets of graphs (not necessarily closed under taking isomorphism) containing $2^{n}$ different graphs of size $n$. We call the family of these sets  $\mathcal{C}$. There are $2^{n \choose 2}$ possible graphs of size $n$, so $2^{n \choose 2} \choose 2^{n}$ possible choices for graphs in $\mathcal{C}$. We deduce that there exists $c_1>0$ such that $|\mathcal{C}| \geq  2^{c_1\cdot n^2 \cdot 2^{n}}$.

On the other hand, we count the number of one-round deterministic algorithms that recognize a set of graphs in $\mathcal{C}$ with bandwidth at most $\beta$.  A one-round deterministic algorithm is composed of two parts: the algorithm before the communication round, and the algorithm after the communication. The first part of an algorithm is defined by the messages that a node sends on each input. The input of a node is its neighborhood represented by a Boolean vector of size $n$, and an integer representing its label. Therefore, the first part of an algorithm is defined by the messages corresponding to all the $n2^{n}$ possible inputs. Since the bandwidth is $\beta$, we obtain that there are $2^{n\beta 2^n}$ possible choices for the first part of an algorithm. 

%After the communication round, an algorithm is defined by the answers  a node produces after receiving the messages of the communication round. In this case an algorithm must answer whether the input graph belongs to a given class in $\mathcal{C}$.
% Note that these answers can not depend on the input of any node, we are considering only one-round algorithms. In other words, for a set of graphs $\mathcal{G} \in \mathcal{C}$, 

The second part of an algorithm is defined by a function $f_\mathcal{G}: (\{0,1\}^{b})^n \rightarrow \{0,1\}$, such that if $m = (m_1, \dots, m_n)$ are the messages sent by the nodes in the communication round, then $f(m) = 1$ if and only if $m$ was produced from an input graph belonging to $\mathcal{G}$. The crucial observation is that this implies that $f$ can output $1$ in at most $2^{n}$ inputs. Therefore, the number of possible second parts of an algorithm is $\sum_{i\in [2^n]}{2^{n\beta} \choose i} \leq (1+2^{n\beta})^{2^n} \leq 2^{c_2 \cdot n\beta2^n} $, where $c_2>0$ is a constant.

We deduce that the number of one-round deterministic algorithms with bandwidth $\beta$ that are capable to recognize a set of graphs in $\mathcal{C}$ is at most $2^{c_3 n\beta 2^n }$, with $c_3>0$. Since we are considering only deterministic algorithms, two different sets must be recognized by two different algorithms. This implies that $2^{c_3n\beta 2^n}$ must be greater than $2^{c_1n^2 2^{n}}$, so $\beta = \Omega(n)$. 

Finally, we construct $\mathcal{G}$ by picking, for each $n$, one set of graphs contained in $\mathcal{C}$ that can not be recognized by any algorithm of bandwidth $o(n)$. 
\end{proof}

\begin{remark}
Note that  for any set of graphs $\mathcal{G}$, problem {\GSRECO} is at least as hard as {\GRECON}. We conclude that there exists a set of graphs $\mathcal{G}$ satisfying  $|\mathcal{G}_n|\leq 2^n$ such that,   any one-round deterministic algorithm  that solves {\GSRECO}, requires bandwidth $\Omega(n)$. Note that,  since in this case $|\mathcal{G}_n|\leq 2^n$, from  Theorem \ref{theo:general1R} we know that 
 {\GSRECO}  can be solved using a one-round private-coin algorithm with bandwidth $\cO(\sqrt{n\log n})$ whp. 
\end{remark}

\vspace{-0.25cm}
\section{Discussion}\label{sec:conclusion}

In this paper we showed  that all graph classes can be optimally reconstructed in two rounds in the congested clique model. 
But our algorithm is randomized, it uses private-coins. A natural question is the following: is it possible to achieve the same deterministically?
In other words, given an arbitrary graph class $\mathcal{G}$, is it always possible to solve {\GSRECO} with a two-round deterministic algorithm
with bandwidth  $\cO(\log |\mathcal{G}_n|/n + \log n)$? (Note that this is true for the weak version of the reconstruction problem {\GRECO}).

We also restricted the reconstruction problem to one-round algorithms. We showed that, if  $\mathcal{G}$ is an hereditary graph class such as forests, planar graphs, interval graphs, unit disc graphs, chordal bipartite graphs, bounded tree-widh graphs, $d$-degenerate graphs, etc., then 
{\GSRECO} can be solved, whp, with a one-round private-coin
algorithm that uses  bandwidth  $\cO(\log |\mathcal{G}_n|/n)$. Can we extend this result to every hereditary class of graphs?

A related problem is the recognition problem, where we simply want to decide whether the input graph belongs to the class $\mathcal{G}$. It seems that sometimes we can not solve the recognition problem without solving the reconstruction problem. This seems to be true in the case of trees and, more generally, in the case of $d$-degenerate graphs. But this is not always the case. Sometimes,  solving the recognition problem requires a much smaller bandwidth. For example, consider the class of split graphs. A split graph is a graph where the vertices can be partitioned into a clique and an independent set (these two sets are connected arbitrarily). The class of split graphs contains $2^{\Omega(n^2)}$ graphs of size $n$, so it cannot be reconstructed with cost $o(n)$. However, split graphs can be characterized solely by their degree sequences  (see \cite{brandstadt1999graph}), so they can be recognized by a one-round  deterministic algorithm, where each node sends its degree ($\cO(\log n)$ bits). 
It is an interesting challlenge to understand the cases where  we can solve the recognition problem {\emph{without}} solving the reconstruction problem.

\vspace{-0.25cm}

\bibliographystyle{plain}
%\bibliography{bib1short}

\end{document}